\documentclass[runningheads,a4paper,11pt]{llncs}
\usepackage[utf8x]{inputenc}

\usepackage{fullpage}
\usepackage{amsmath,amssymb}
\usepackage{xspace}

\usepackage{todonotes}
\usepackage{hyperref}

\newcommand{\NP}{\ensuremath{\mathsf{NP}}\xspace}

\newcommand{\F}{\ensuremath{\mathcal{F}}\xspace}
\newcommand{\N}{\mathbb{N}}

\newcommand{\Q}{\mathbb{Q}}

\newcommand{\Z}{\mathbb{Z}}

\newcommand{\Oh}{\mathcal{O}}

\newcommand{\norm}[1]{\big\|#1\big\|}
\DeclareMathOperator{\sign}{sign}
%opening
\title{Polynomial Kernels for Weighted Problems\thanks{Supported by the Emmy Noether-program of the German Research Foundation (DFG), KR 4286/1, and ERC Starting Grant 306465 (BeyondWorstCase).}}
\titlerunning{Polynomial Kernels for Weighted Problems}
\author{Michael Etscheid
   \and Stefan Kratsch
   \and Matthias Mnich
   \and Heiko R{\"o}glin}
\authorrunning{Etscheid, Kratsch, Mnich, R{\"o}glin}
\institute{Universit{\"a}t Bonn, Institut f{\"u}r Informatik, Bonn, Germany.
  \texttt{\{etscheid@cs.,kratsch@cs.,mmnich@,roeglin@cs.\}uni-bonn.de}}

\begin{document}

\maketitle

\begin{abstract}
  Kernelization is a formalization of efficient preprocessing for \NP-hard problems using the framework of parameterized complexity. Among open problems in kernelization it has been asked many times whether there are deterministic polynomial kernelizations for \textsc{Subset Sum} and \textsc{Knapsack} when parameterized by the number $n$ of items.
  
  We answer both questions affirmatively by using an algorithm for compressing numbers due to Frank and Tardos (Combinatorica 1987). This result had been first used by Marx and V\'egh (ICALP 2013) in the context of kernelization. We further illustrate its applicability by giving polynomial kernels also for weighted versions of several well-studied parameterized problems. Furthermore, when parameterized by the different item sizes we obtain a polynomial kernelization for \textsc{Subset Sum} and an exponential kernelization for \textsc{Knapsack}. Finally, we also obtain kernelization results for polynomial integer programs.
\end{abstract}

\section{Introduction}
\label{sec:introduction}
The question of handling numerical values is of fundamental importance in computer science. Typical issues are precision, numerical stability, and representation of numbers. In the present work we study the effect that the presence of (possibly large) numbers has on weighted versions of well-studied \NP-hard problems. In other words, we are interested in the effect of large numbers on the computational complexity of solving hard combinatorial problems. Concretely, we focus on the effect that weights have on the preprocessing properties of the problems, and study this question using the notion of \emph{kernelization} from parameterized complexity. Very roughly, kernelization studies whether there are problem \emph{parameters} such that any instance of a given \NP-hard problem can be efficiently reduced to an equivalent instance of small size in terms of the parameter. Intuitively, one may think of applying a set of correct simplification rules, but additionally one has a proven size bound for instances to which no rule applies.

The issue of handling large weights in kernelization has been brought up again and again as an important open problem in kernelization \cite{Worker2010OpenProblems,DagstuhlKernelization2012Report,Worker2013OpenProblems,FPTSummerSchool2014OpenProblems}. For example, it is well-known that for the task of finding a vertex cover of at most $k$ vertices for a given unweighted graph $G$ one can efficiently compute an equivalent instance $(G',k')$ such that $G'$ has at most $2k$ vertices. Unfortunately, when the vertices of~$G$ are additionally equipped with positive rational weights and the chosen vertex cover needs to obey some specified maximum weight $W\in\Q$ then it was long unknown how to encode (and shrink) the vertex weights to bitsize polynomial in $k$.
In this direction, Chebl{\'i}k and Chebl{\'i}kov{\'a}~\cite{ChlebikChlebikova2008} showed that an equivalent graph~$G'$ with total vertex weight at most $2w^*$ can be obtained in polynomial time, whereby~$w^*$ denotes the minimum weight of a vertex cover of~$G$.
This, however, does not mean that the size of~$G'$ is bounded, unless one makes the additional assumption that the vertex weights are bounded from below; consequently, their method only yields a kernel with that extra requirement of vertex weights being bounded away from zero.
In contrast, we do not make such an assumption.

Let us attempt to clarify the issue some more.
The task of finding a polynomial kernelization for a weighted problem usually comes down to two parts: (1) Deriving reduction rules that work correctly in the presence of weights. The goal, as for unweighted problems, is to reduce the number of relevant objects, e.g., vertices, edges, sets, etc., to polynomial in the parameter. (2) Shrinking or replacing the weights of remaining objects such that their encoding size becomes (at worst) polynomial in the parameter.
The former part usually benefits from existing literature on kernels of unweighted problems, but regarding the latter only little progress was made.

For a pure weight reduction question let us consider the {\sc Subset Sum} problem.
Therein we are given~$n$ numbers $a_1,\ldots,a_n\in\N$ and a target value $b\in\N$ and we have to determine whether some subset of the~$n$ numbers has sum exactly $b$.
Clearly, reducing such an instance to size polynomial in $n$ hinges on the ability of handling large numbers $a_i$ and~$b$. Let us recall that a straightforward dynamic program solves {\sc Subset Sum} in time $\Oh(nb)$, implying that large weights are to be expected in hard instances.
Harnik and Naor~\cite{HarnikNaor2010} showed that taking all numbers modulo a sufficiently large random prime $p$ of magnitude about $2^{2n}$ produces an equivalent instance with error probability exponentially small in $n$.
(Note that the obtained instance is with respect to arithmetic modulo~$p$.)
The total bitsize then becomes~$\Oh(n^2)$. Unfortunately, this elegant approach fails for more complicated problems than {\sc Subset Sum}.

Consider the {\sc Subset Range Sum} variant of {\sc Subset Sum} where we are given not a single target value $b$ but instead a lower bound $L$ and an upper bound $U$ with the task of finding a subset with sum in the interval $\{L,\ldots,U\}$. 
Observe that taking the values $a_i$ modulo a large random prime faces the problem of specifying the new target value(s), in particular if $U-L>p$ because then every remainder modulo $p$ is possible for the solution.
Nederlof et al.~\cite{NederlofEtAl2012} circumvented this issue by creating not one but in fact a polynomial number of small instances. Intuitively, if a solution has value close to either $L$ or~$U$ then the randomized approach will work well (possibly making a separate instance for target values close to $L$ or $U$).
For solutions sufficiently far from $L$ or $U$ there is no harm in losing a little precision and dividing all numbers by $2$; then the argument iterates.
Overall, because the number of iterations is bounded by the logarithm of the numbers (i.e., their encoding size), this creates a number of instances that is polynomial in the input size, with each instance having size $\Oh(n^2)$; if the initial input is ``yes'' then at least one of the created instances is ``yes''.\footnote{This is usually called a (disjunctive) Turing kernelization.}

To our knowledge, the mentioned results are the only positive results that are aimed directly at the issue of handling large numbers in the context of kernelization.
Apart from these, there are of course results where the chosen parameter bounds the variety of feasible weights and values, but this only applies to integer domains; e.g., it is easy to find a kernel for \textsc{Weighted Vertex Cover} when all weights are positive integers and the parameter is the maximum total weight $k$.
On the negative side, there are a couple of lower bounds that rule out polynomial kernelizations for various weighted and ILP problems, see, e.g., \cite{BodlaenderEtAl2014,Kratsch2013b}.
Note, however, that the lower bounds appear to ``abuse'' large weights in order to build gadgets for lower bound proofs that also include a super-polynomial number of objects as opposed to having just few objects with weights of super-polynomial encoding size. In other words, the known lower bounds pertain rather to the first step, i.e. finding reduction rules that work correctly in the presence of weights, than to the inherent complexity of the numbers themselves.
Accordingly, since 2010 the question for a deterministic polynomial kernelization for \textsc{Subset Sum} or \textsc{Knapsack} with respect to the number of items can be found among open problems in kernelization~\cite{Worker2010OpenProblems,DagstuhlKernelization2012Report,Worker2013OpenProblems,FPTSummerSchool2014OpenProblems}.

Recently, Marx and V\'egh \cite{MarxVegh2013b} gave a polynomial kernelization for a weighted connectivity augmentation problem. As a crucial step, they use a technique of Frank and Tardos~\cite{FrankTardos1987}, originally aimed at obtaining strongly polynomial-time algorithms, to replace rational weights by sufficiently small and equivalent integer weights.
They observe and point out that this might be a useful tool to handle in general the question of getting kernelizations for weighted versions of parameterized problems.
It turns out that, more strongly, Frank and Tardos' result can also be used to settle the mentioned open problems regarding {\sc Knapsack} and {\sc Subset Sum}.
We point out that this is a somewhat circular statement since Frank and Tardos had set out to, amongst others, improve existing algorithms for ILPs, which could be seen as \emph{very general} weighted problems.

\paragraph{Our work.}
We use the theorem of Frank and Tardos~\cite{FrankTardos1987} to formally settle the open problems, i.e., we obtain deterministic kernelizations for {\sc Subset Sum($n$)} and {\sc Knapsack($n$)}, in Sect.~\ref{sec:settlingopenproblemsviathefranktardostheorem}.
Generally, in the spirit of Marx and V\'egh's observation, this allows to get polynomial kernelizations whenever one is able to first reduce the \emph{number of objects}, e.g., vertices or edges, to polynomial in the parameter. The theorem can then be used to sufficiently shrink the weights of all objects such that the \emph{total size} becomes polynomial in the parameter.

Motivated by this, we consider weighted versions of several well-studied parameterized problems, e.g., {\sc $d$-Hitting Set}, {\sc $d$-Set Packing}, and {\sc Max Cut}, and show how to reduce the number of relevant structures to polynomial in the parameter.
An application of Frank and Tardos' result then implies polynomial kernelizations.
We present our small kernels for weighted problems in Sect.~\ref{sec:smallkernelsforweightedparameterizedproblems}.

Next, we consider the {\sc Knapsack} problem and its special case {\sc Subset Sum}, in Sect.~\ref{sec:kernelknapsackproblems}.
For {\sc Subset Sum} instances with only $k$ item sizes, we derive a kernel of size polynomial in~$k$.
This way, we are improving the exponential-size kernel for this problem due to Fellows et al.~\cite{FellowsEtAl2012}.
We also extend the work of Fellows et al. in another direction by showing that the more general {\sc Knapsack} problem is fixed-parameter tractable (i.e. has an exponential kernel) when parameterized by the number $k$ of item sizes, even for unbounded number of item values.
On the other hand, we provide quadratic kernel size lower bounds for general {\sc Subset Sum} instances assuming the Exponential Time Hypothesis \cite{ImpagliazzoEtAl2001}.

Finally, as a possible tool for future kernelization results we show that the weight reduction approach also carries over to polynomial ILPs so long as the maximum degree and the domains of variables are sufficiently small, in Sect.~\ref{sec:integerpolynomialprogrammingwithboundedrange}.

\section{Preliminaries}
\label{sec:preliminaries}

A \emph{parameterized problem} is a language $\Pi\subseteq\Sigma^*\times\N$, where $\Sigma$ is a finite alphabet; the second component~$k$ of instances $(I,k)\in\Sigma^*\times\N$ is called the \emph{parameter}.
A problem $\Pi\subseteq\Sigma^*\times\N$ is \emph{fixed-parameter tractable} if it admits a \emph{fixed-parameter algorithm}, which decides instances $(I,k)$ of~$\Pi$ in time $f(k) \cdot |I|^{\Oh(1)}$ for some computable function~$f$.
The class of fixed-parameter tractable problems is denoted by $\mathsf{FPT}$.
Evidence that a problem~$\Pi$ is unlikely to be fixed-parameter tractable is that~$\Pi$ is $\mathsf{W}[t]$-hard for some $t\in\mathbb N$ or $\mathsf{W}[P]$-hard, where $\mathsf{FPT}\subseteq \mathsf{W}[1]\subseteq \mathsf{W}[2]\subseteq \hdots \subseteq \mathsf{W}[P]$.
To prove hardness of $\Pi$, one can give a \emph{parameterized reduction} from a $\mathsf{W}[\cdot]$-hard problem~$\Pi'$ to~$\Pi$ that maps every instance $I'$ of~$\Pi'$ with parameter $k'$ to an instance $I$ of $\Pi$ with parameter $k \leq g(k')$ for some computable function $g$ such that $I$ can be computed in time $f(k') \cdot |I'|^{\Oh(1)}$ for some computable function~$f$, and $I$ is a ``yes''-instance if and only if $I'$ is.
If~$f$ and $g$ are polynomials, such a reduction is called a \emph{polynomial parameter transformation}.
A problem~$\Pi$ that is $\mathsf{NP}$-complete even if the parameter~$k$ is constant is said to be \emph{para-$\mathsf{NP}$-complete}. 

A \emph{kernelization} for a parameterized problem $\Pi$ is an efficient algorithm that given any instance $(I,k)$ returns an instance $(I',k')$ such that $(I,k)\in\Pi$ if and only if $(I',k')\in\Pi$ and such that $|I'|+k'\leq f(k)$ for some computable function~$f$.
The function $f$ is called the \emph{size} of the kernelization, and we have a polynomial kernelization if $f(k)$ is polynomially bounded in $k$.
It is known that a parameterized problem is fixed-parameter tractable if and only if it is decidable and has a kernelization.
Nevertheless, the kernels implied by this fact are usually of superpolynomial size.
(The size matches the $f(k)$ from the run time, which for \NP-hard problems is usually exponential as typical parameters are upper bounded by the instance size.)
On the other hand, assuming $\mathsf{FPT\neq W[1]}$ no $\mathsf{W[1]}$-hard problem has a kernelization.
Further, there are tools for ruling out polynomial kernels for some parameterized problems~\cite{Drucker2012,BodlaenderEtAl2009} under an appropriate complexity assumption (namely that $\mathsf{NP\nsubseteq coNP/poly}$).
Such lower bounds can be transferred by the mentioned polynomial parameter transformations~\cite{BodlaenderEtAl2011}.

\section{Settling Open Problems via the Frank-Tardos Theorem}
\label{sec:settlingopenproblemsviathefranktardostheorem}

\subsection{Frank and Tardos' theorem}

Frank and Tardos~\cite{FrankTardos1987} describe an algorithm which proves the following theorem.
\begin{theorem}[\cite{FrankTardos1987}]
\label{thm:frank_tardos}
  There is an algorithm that, given a vector $w \in \Q^r$ and an integer~$N$, in polynomial time finds a vector $\overline{w} \in \Z^r$ with~$\norm{\overline{w}}_\infty \leq 2^{4r^3} N^{r(r+2)}$ such that $\sign(w \cdot b) = \sign(\overline{w} \cdot b)$ for all vectors $b\in\mathbb Z^r$ with $\norm{b}_1 \leq N - 1$.
\end{theorem}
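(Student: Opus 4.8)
The plan is to derive the theorem from the existence of a polynomial-time algorithm for simultaneous Diophantine approximation (the Lenstra--Lenstra--Lov\'asz lattice basis reduction algorithm): given $\alpha \in \Q^r$ and a precision $\varepsilon \in (0,1)$, one can compute in polynomial time a positive integer $q \le 2^{\Oh(r^2)}\varepsilon^{-r}$ and $p \in \Z^r$ with $\norm{q\alpha - p}_\infty \le \varepsilon$. Since $\sign(w\cdot b)$ is invariant under scaling $w$ by a positive rational, I would first assume $w \ne 0$ and normalize so that $\norm{w}_\infty = 1$. I then apply the approximation algorithm to $w$ with $\varepsilon \approx 1/(2N)$, chosen so that $\varepsilon(N-1) < \tfrac12$, obtaining $q$ and $\overline{w}^{(0)} \in \Z^r$ with $v := qw - \overline{w}^{(0)}$ satisfying $\norm{v}_\infty \le \varepsilon$. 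Since some coordinate of $w$ has absolute value $1$, the corresponding coordinate of $qw$ equals $\pm q$ with $q \ge 1$ and rounds to a nonzero integer, so $\overline{w}^{(0)} \ne 0$, and $\norm{\overline{w}^{(0)}}_\infty \le q + 1 = 2^{\Oh(r^2)} N^{r}$.

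The heart of the argument is that a single application already almost works. For any $b \in \Z^r$ with $\norm{b}_1 \le N-1$ we have $\overline{w}^{(0)}\cdot b = q\,(w\cdot b) - v\cdot b$ and $|v\cdot b| \le \varepsilon(N-1) < \tfrac12$. Hence: if $w\cdot b = 0$ then $|\overline{w}^{(0)}\cdot b| < 1$, forcing $\overline{w}^{(0)}\cdot b = 0$; if $|q\,(w\cdot b)| \ge \tfrac12$ then $\overline{w}^{(0)}\cdot b$ is a nonzero integer with the same sign as $w\cdot b$; and even when $0 < |q\,(w\cdot b)| < \tfrac12$, if $\overline{w}^{(0)}\cdot b \ne 0$ then its value is $\pm 1$ with the correct sign. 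So the only ``bad'' vectors are those $b$ with $\overline{w}^{(0)}\cdot b = 0$ but $w\cdot b \ne 0$, and on the lattice $L = \{b \in \Z^r : \overline{w}^{(0)}\cdot b = 0\}$ we have $q\,(w\cdot b) = v\cdot b$, so $\sign(w\cdot b) = \sign(v\cdot b)$ for $b \in L$. Since $\overline{w}^{(0)} \ne 0$, the lattice $L$ has rank $r-1$; computing a reduced basis $B$ of $L$ and substituting $b = By$ turns the requirement ``preserve $\sign(v\cdot b)$ for $b \in L$ with $\norm{b}_1 \le N-1$'' into an $(r-1)$-dimensional instance with target vector $B^\top v$ and a bound $N'$ on $\norm{y}_1$ depending on the norms of $B$. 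Recursing yields an integer vector handling $L$; lifting it back to $\Z^r$ and adding it to a sufficiently large multiple $C\cdot\overline{w}^{(0)}$ --- with $C$ chosen so that the $\overline{w}^{(0)}$-term dominates the lifted vector on every $b$ with $\overline{w}^{(0)}\cdot b \ne 0$ --- produces the final $\overline{w}$.

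Finally, I would track the parameters through the at most $r$ levels of the recursion: each level contributes a multiplicative factor $q \le 2^{\Oh(r^2)} N^{\Oh(r)}$, and the nearest-integer roundings and the dominating constants $C$ add only polynomially bounded slack, so the product is $2^{\Oh(r^3)} N^{\Oh(r^2)}$, which one verifies lies inside the stated bound $2^{4r^3}N^{r(r+2)}$; all the ingredients (normalization, the LLL approximation step, a reduced basis of $L$, the lift) run in polynomial time. The step I expect to be the main obstacle is the dimension reduction: one must extract a short, explicitly computable basis of the kernel lattice $L$ and an integral lifting map so that the change of coordinates $b \mapsto y$ inflates both the $\ell_1$-bound and the entry sizes only in a controlled, polynomially bounded manner --- this is precisely where a second use of lattice basis reduction enters and where the exact exponents in the final bound are pinned down.
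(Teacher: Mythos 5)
First, note that the paper does not prove this statement at all: it is imported verbatim from Frank and Tardos, so there is no in-paper proof to compare against; your sketch has to be judged against the original argument. In spirit you are close to it --- normalize so that $\norm{w}_\infty=1$, apply LLL-based simultaneous Diophantine approximation with precision $\varepsilon\approx 1/(2N)$ to get $q\le 2^{\Oh(r^2)}N^{r}$ and a rounding $\overline{w}^{(0)}$, observe that signs are preserved except possibly on $\{b:\overline{w}^{(0)}\cdot b=0,\ w\cdot b\neq 0\}$, recurse there, and glue with a dominating multiple of $\overline{w}^{(0)}$. Your case analysis of the single step is essentially right (one small slip: if $|q(w\cdot b)|<\tfrac12$ and $|v\cdot b|<\tfrac12$ then $|\overline{w}^{(0)}\cdot b|<1$ forces $\overline{w}^{(0)}\cdot b=0$, so the sub-case ``$\overline{w}^{(0)}\cdot b=\pm1$ with the correct sign'' is vacuous).

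The genuine gap is exactly the step you flag yourself: the recursion through the kernel lattice $L=\{b\in\Z^r:\overline{w}^{(0)}\cdot b=0\}$. To make it work you must (i) compute a reduced basis $B$ of $L$ and bound how the constraint $\norm{b}_1\le N-1$ translates into a bound on the coefficient vector $y$ (doable, but needs lower bounds on the Gram--Schmidt norms of an LLL-reduced basis of an integer lattice), and (ii) lift the $(r-1)$-dimensional solution $\tilde u$ back to an \emph{integer} vector $\tilde v\in\Z^r$ with $\tilde v\cdot(By)$ sign-equivalent to $\tilde u\cdot y$; the natural lift $B(B^\top B)^{-1}\tilde u$ is only rational, and clearing denominators multiplies the magnitude by a factor on the order of $\det(B^\top B)$, which you have not tracked --- so neither the correctness of the lift nor the claimed fit inside $2^{4r^3}N^{r(r+2)}$ is actually established. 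Frank and Tardos avoid this entire difficulty: instead of restricting to $L$, they recurse on the residual $v=qw-\overline{w}^{(0)}$ \emph{in the same ambient space $\Z^r$ and with the same $\ell_1$-ball}. Because some coordinate of $w$ has $|w_j|=1$, the number $qw_j$ is an integer and the $j$-th residual coordinate is $0$, so the support strictly shrinks and the recursion depth is at most $r$ with no change of coordinates, no lattice basis, and no lifting; the final vector is a combination $\sum_i M^{k-i}u_i$ of the successive roundings with a dominating factor $M$ chosen so that $\sign(\overline{w}\cdot b)$ equals the sign of the first nonzero $u_i\cdot b$, which by your own identity $q(w\cdot b)=v\cdot b$ on $\{u_1\cdot b=0\}$ equals $\sign(w\cdot b)$. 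If you replace your kernel-lattice recursion by this residual recursion, your sketch becomes the actual proof, and the bound $2^{\Oh(r^3)}N^{\Oh(r^2)}$ follows by multiplying the at most $r$ per-level factors $2^{\Oh(r^2)}N^{\Oh(r)}$ without any of the lifting losses.
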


This theorem allows us to compress linear inequalities to an encoding length which is polynomial in the number of variables.
Frank and Tardos' algorithm runs even in strongly polynomial time. As a consequence, all kernelizations presented in this work also have a strongly polynomial running time.

\begin{example}
  There is an algorithm that, given a vector $w \in \Q^r$ and a rational $W \in \Q$, in polynomial time finds a vector $\overline{w} \in \Z^r$ with $\norm{\overline{w}}_\infty = 2^{\Oh(r^3)}$ and an integer $\overline{W} \in \Z$ with total encoding length~$\Oh(r^4)$, such that $w \cdot x \leq W $ if and only if $\overline{w} \cdot x \leq \overline{W}$ for every vector $x \in \{0,1\}^r$.
\end{example}
\begin{proof}
  Use Theorem~\ref{thm:frank_tardos} on the vector $(w, W) \in \Q^{r+1}$ with $N = r + 2$ to obtain the resulting vector $(\bar{w}, \bar{W})$. Now let $b = (x, -1) \in \Z^{r+1}$ and note that $\norm{b}_1 \leq N - 1$.
  The inequality $w \cdot x \leq W$ is false if and only if $\sign(w \cdot x - W) = \sign((w, W) \cdot (x, -1)) = \sign((w,W) \cdot b)$ is equal to $+1$. The same holds for $\overline{w} \cdot x \leq \overline{W}$.

  As each $|\bar{w}_i|$ can be encoded with $\Oh(r^3 + r^2 \log N) = \Oh(r^3)$ bits, the whole vector $\bar{w}$ has encoding length $\Oh(r^4)$.
\qed
\end{proof}

\subsection{Polynomial Kernelization for Knapsack}

A first easy application of Theorem~\ref{thm:frank_tardos} is the kernelization of {\sc Knapsack} with the number~$n$ of different items as parameter.

\begin{center}
  \framebox[\textwidth]{
  \begin{tabular}{rl}
    \multicolumn{2}{l}{{\sc Knapsack($n$)}}\\
%                       & \\
  \textit{Input:}      & An integer $n\in \N$, rationals $W, P \in \Q$,
                        a weight vector $w \in \Q^n$, and a\\ & profit vector $p \in \Q^n$.\\
  \textit{Parameter:}  & $n$.\\
  \textit{Question:}   & Is there a vector $x \in \{0,1\}^n$ with $w \cdot x \leq W$ and $p \cdot x \geq P$?\\
  \end{tabular}}
\end{center}

\begin{theorem}
  {\sc Knapsack($n$)} admits a kernel of size $\Oh(n^4)$.\qed
\end{theorem}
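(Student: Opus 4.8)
The plan is to apply Theorem~\ref{thm:frank_tardos} twice (or once on a combined vector) to compress the weight vector $w$ together with the threshold $W$, and the profit vector $p$ together with the threshold $P$, exactly as in the Example preceding the statement. Concretely, I would form the vector $(w, W) \in \Q^{n+1}$ and invoke the Example with $r = n+1$ to obtain $(\overline{w}, \overline{W}) \in \Z^{n+1}$ such that $w \cdot x \leq W \iff \overline{w} \cdot x \leq \overline{W}$ for all $x \in \{0,1\}^n$; symmetrically, apply it to $(-p, -P)$ (or argue directly with the $\geq$ inequality) to get $(\overline{p}, \overline{P})$ with $p \cdot x \geq P \iff \overline{p} \cdot x \geq \overline{P}$ for all $x \in \{0,1\}^n$. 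Output the instance $(n, \overline{W}, \overline{P}, \overline{w}, \overline{p})$.

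The correctness is immediate from the Example: a vector $x \in \{0,1\}^n$ satisfies $w\cdot x \le W$ and $p \cdot x \ge P$ if and only if it satisfies $\overline{w}\cdot x \le \overline{W}$ and $\overline{p}\cdot x \ge \overline{P}$, so the two instances are equivalent. For the size bound, the Example guarantees $\norm{\overline{w}}_\infty = 2^{\Oh(n^3)}$, hence each entry $\overline{w}_i$ and $\overline{W}$ has encoding length $\Oh(n^3)$, so the whole weight part has size $\Oh(n^4)$; the same holds for the profit part. Since $n$ itself contributes only $\Oh(\log n)$ bits, the total kernel size is $\Oh(n^4)$. The running time is polynomial (indeed strongly polynomial) because Theorem~\ref{thm:frank_tardos} runs in strongly polynomial time.

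There is essentially no obstacle here — the theorem is a direct corollary of the Example once one notes that {\sc Knapsack} is just a conjunction of two linear inequalities over $\{0,1\}^n$. The only minor points to handle cleanly are: (i) the $\geq$ constraint, which is dealt with by negating both $p$ and $P$ before applying the Example, or equivalently by observing $\sign$ is odd; and (ii) making sure both compressions use the same variable set $x$, which they do since the Example preserves the inequality for \emph{every} $x \in \{0,1\}^n$ simultaneously. One could also merge everything into a single call of Theorem~\ref{thm:frank_tardos} on the concatenated vector $(w, W, p, P) \in \Q^{2n+2}$ with $N = 2n+3$ and extract both sign conditions using suitable $\pm 1$-extended test vectors $b$; this gives the same $\Oh(n^4)$ bound and is perhaps the most economical presentation.
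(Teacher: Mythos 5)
Your proposal is correct and matches the paper's intended argument: the paper leaves the proof implicit (the theorem is stated with an immediate \qed right after the Example), and the intended route is precisely to apply the Frank--Tardos compression to the weight constraint $(w,W)$ and the profit constraint $(p,P)$ separately, yielding $\Oh(n^3)$ bits per entry and $\Oh(n^4)$ total. Your handling of the $\geq$ constraint by negation and the optional single combined call are both fine and add nothing beyond what the paper's approach already requires.
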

As a consequence, also {\sc Subset Sum($n$)} admits a kernel of size $\Oh(n^4)$.

%%%%%%%%%%%%%%%%%%%%%%%%%%%%%%%%%%%%%%%%%%%%%%%%%%%%%%%%%%%%%%%%%%%%%%%%%%%%%%%%%%%%%%%%%%%%%%%%%%%%%%%%%%%%%%%%%%%%%%%%%%%%%%%%%%%%%%%%%%%%%%%%%%%%%%%%%%%%%%%%%%%%%%%%%%%%%%%%%%%%%%%%%%%%%%%%%%%%%%%%%%%%%%%%%%%%%%%%%%%%%%%%%%%%%%%%%%%%%%%%%%%%%%%%%%%%%%%%%%%%%%%%%%%%%%%%%%%%%%%%%%%%%%%%%%%%%%%%%%%%%%%%%%%%%%%%%%%%%%%%%%%%%%%%%%%%%%%%%

\section{Small Kernels for Weighted Parameterized Problems}
\label{sec:smallkernelsforweightedparameterizedproblems}

The result of Frank and Tardos implies that we can easily handle large weights or numbers in kernelization provided that the number of different objects is already sufficiently small (e.g., polynomial in the parameter). In the present section we show how to handle the first step, i.e., the reduction of the number of objects, in the presence of weights for a couple of standard problems. Presumably the reduction in size of numbers is not useful for this first part since the number of different values is still exponential.

\subsection{Hitting Set and Set Packing}

In this section we outline how to obtain polynomial kernelizations for {\sc Weighted $d$-Hitting Set} and {\sc Weighted $d$-Set Packing}.
Since these problems generalize quite a few interesting hitting/covering and packing problems, this extends the list of problems whose weighted versions directly benefit from our results.
The problems are formally defined as follows.

\begin{center}
  \framebox[\textwidth]{
  \begin{tabular}{rl}
    \multicolumn{2}{l}{{\sc Weighted $d$-Hitting Set($k$)}}\\
%                       & \\
  \textit{Input:}      & A set family $\F\subseteq \binom{U}{d}$, a function $w\colon U\to \N$, and $k,W\in\N$.\\
  \textit{Parameter:}  & $k$.\\
  \textit{Question:}   & Is there a set $S\subseteq U$ of cardinality at most $k$ and weight
                       $\sum_{u\in S}w(u)\leq W$  such\\ & that $S$ intersects every set in $\F$?
  \end{tabular}\hfill}
\end{center}

\begin{center}
  \framebox[\textwidth]{
  \begin{tabular}{rl}
    \multicolumn{2}{l}{{\sc Weighted $d$-Set Packing($k$)}}\\
%                       & \\
  \textit{Input:}      & A set family $\F\subseteq \binom{U}{d}$, a function $w\colon \F\to \N$, and $k,W\in\N$.\\
  \textit{Parameter:}  & $k$.\\
  \textit{Question:}   & Is there a family $\F^*\subseteq\F$ of exactly $k$ disjoint sets of weight
                        $\sum_{F\in \F^*}w(F)\geq W$?
  \end{tabular}\hfill}
\end{center}

Note that we treat~$d$ as a constant. We point out that the definition of {\sc Weighted Set Packing($k$)} restricts attention to exactly $k$ disjoint sets of weight at least $W$. If we were to relax to at least $k$ sets then the problem would be $\NP$-hard already for $k=0$. On the other hand, the kernelization that we present for {\sc Weighted Set Packing($k$)} holds also if we require~$\F^*$ to be of cardinality at most~$k$ (and total weight at least $W$, as before).

Both kernelizations rely on the Sunflower Lemma of Erd\H{o}s and Rado~\cite{ErdosRado1960}, same as their unweighted counterparts.
We recall the lemma.

\begin{lemma}[Erd\H{o}s and Rado~\cite{ErdosRado1960}]
\label{lem:sunflowerlemma}
  Let $\F$ be a family of sets, each of size~$d$, and let $k\in\N$.
  If $|\F|>d!k^d$ then we can find in time $\Oh(|\F|)$ a so-called $k+1$-sunflower, consisting of $k+1$ sets $F_1,\ldots,F_{k+1}\in\F$ such that the pairwise intersection of any two $F_i,F_j$ with $i\neq j$ is the same set~$C$, called the core.
\end{lemma}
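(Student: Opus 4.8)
The plan is to prove the lemma by the classical induction of Erd\H{o}s and Rado on the common set size $d$, paying attention throughout to the fact that each step can be implemented in time linear in $|\F|$ once $d$ is treated as a constant. Since $\F$ consists of distinct $d$-sets, the base case $d=1$ is immediate: if $|\F|>1!\,k^{1}=k$ then $\F$ contains at least $k+1$ distinct singletons, and any $k+1$ of them form a $(k+1)$-sunflower with empty core, found in time $\Oh(|\F|)$.

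For the inductive step, fix $d\geq 2$, assume the statement for families of $(d-1)$-sets, and let $\F$ be a family of $d$-sets with $|\F|>d!\,k^{d}$. First I would greedily build a maximal subfamily $F_{1},\dots,F_{t}$ of pairwise disjoint members of $\F$: scan $\F$ once, keeping the union of the chosen sets in a boolean array over the (at most $d|\F|$) occurring ground elements, so that testing disjointness of a new set and updating the array each cost $\Oh(d)$. If $t\geq k+1$, these sets already constitute a $(k+1)$-sunflower with empty core and we stop. Otherwise $t\leq k$, so $Y:=F_{1}\cup\dots\cup F_{t}$ has $|Y|\leq dk$ and, by maximality, every set of $\F$ meets $Y$. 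Hence $\sum_{y\in Y}\lvert\{F\in\F: y\in F\}\rvert\geq|\F|$, so some element $y^{\star}\in Y$ lies in at least $|\F|/|Y|\geq|\F|/(dk)>d!\,k^{d}/(dk)=(d-1)!\,k^{d-1}$ sets of $\F$; such a popular $y^{\star}$ is identified by one further linear pass over $\F$ that increments, for each set, the counters of its elements lying in $Y$.

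Now set $\F_{y^{\star}}:=\{F\setminus\{y^{\star}\}: F\in\F,\ y^{\star}\in F\}$. The map $F\mapsto F\setminus\{y^{\star}\}$ is injective on the sets containing $y^{\star}$, so $\lvert\F_{y^{\star}}\rvert>(d-1)!\,k^{d-1}$ and every member has size $d-1$; by the induction hypothesis $\F_{y^{\star}}$ contains a $(k+1)$-sunflower with some core $C'$, and re-inserting $y^{\star}$ into each of its $k+1$ petals yields a $(k+1)$-sunflower of $\F$ with core $C'\cup\{y^{\star}\}$. For the time bound, each recursion level performs a constant number of $\Oh(d|\F|)$-time scans of the current family, the recursion descends through $d,d-1,\dots,1$ without ever branching (we always recurse on $\F_{y^{\star}}$ alone) and with $|\F_{y^{\star}}|\leq|\F|$, so the total cost is $\Oh(d^{2}|\F|)=\Oh(|\F|)$ for constant $d$.

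The combinatorial core — the averaging step that turns the bound $d!\,k^{d}$ into $(d-1)!\,k^{d-1}$ for the derived family — is routine and is essentially the textbook existence proof. The only point I expect to need genuine care, and which I would regard as the main obstacle, is organizing the greedy maximal-disjoint-family construction and the popular-element search with suitable bookkeeping (bucket or boolean arrays over the ground set, possibly after a relabeling pass) so that each recursion level costs $\Oh(|\F|)$ rather than a naive $\Oh(|\F|^{2})$ or $\Oh(|\F|\cdot|U|)$; everything else is straightforward.
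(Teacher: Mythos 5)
Your proof is correct: it is the classical Erd\H{o}s--Rado induction on $d$ (maximal disjoint subfamily, averaging to find a popular element, recursing on the link of that element), together with a sensible accounting showing each of the $d$ levels costs $\Oh(d|\F|)$, which gives $\Oh(|\F|)$ since the paper treats $d$ as a constant. The paper itself does not prove this lemma but only cites it from Erd\H{o}s and Rado, and your argument is exactly that standard proof, so there is nothing to reconcile.
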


For {\sc Weighted $d$-Hitting Set($k$)} we can apply the Sunflower Lemma directly, same as for the unweighted case: Say we are given $(U,\F,w,k,W)$.
If the size of $\F$ exceeds $d!(k+1)^d$ then we find a $(k+2)$-sunflower $\F_s$ in~$\F$ with core~$C$.
Any hitting set of cardinality at most $k$ must contain an element of~$C$.
The same is true for $k+1$-sunflowers so we may safely delete any set $F\in\F_s$ since hitting the set $C\subseteq F$ is enforced by the remaining $k+1$-sunflower. Iterating this reduction rule yields $\F'\subseteq \F$ with $|\F'|=\Oh(k^d)$ and such that $(U,\F,w,k,W)$ and $(U,\F',w,k,W)$ are equivalent.

Now, we can apply Theorem~\ref{thm:frank_tardos}.
We can safely restrict $U$ to the elements $U'$ present in sets of the obtained set family $\F'$, and let $w'=w|_{U'}$.
By Theorem~\ref{thm:frank_tardos} applied to weights $w'$ and target weight $W$ with $N=k+2$ and $r=\Oh(k^d)$ we get replacement weights of magnitude bounded by $2^{\Oh(k^{3d})}N^{\Oh(k^{2d})}$ and bit size~$\Oh(k^{3d})$.
Note that this preserves, in particular, whether the sum of any~$k$ weights is at most the target weight $W$, by preserving the sign of $w_{i_1}+\ldots+w_{i_k}-W$. The total bitsize is dominated by the space for encoding the weight of all elements of the set~$U'$.

\begin{theorem}
  {\sc Weighted $d$-Hitting Set($k$)} admits a kernelization to $\Oh(k^d)$ sets and total size bounded by $\Oh(k^{4d})$.
\end{theorem}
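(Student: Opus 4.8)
The plan is to combine a classical sunflower-based data reduction, exactly as in the unweighted case, with a single invocation of the Frank--Tardos theorem to shrink the weights.

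First I would establish the reduction rule: as long as $|\F| > d!(k+1)^d$, use the Sunflower Lemma (Lemma~\ref{lem:sunflowerlemma}) to find a $(k+2)$-sunflower $\F_s = \{F_1,\dots,F_{k+2}\}$ with core $C$, and delete an arbitrary petal, say $F_{k+2}$. The correctness claim is that $(U,\F,w,k,W)$ and $(U,\F\setminus\{F_{k+2}\},w,k,W)$ are equivalent. One direction is immediate since a hitting set of a family also hits every subfamily. For the other direction, let $S$ be a hitting set of size at most $k$ (and weight at most $W$) for the smaller family; I want $S\cap F_{k+2}\neq\emptyset$. The sets $F_1\setminus C,\dots,F_{k+1}\setminus C$ are pairwise disjoint and disjoint from $C$, so if $S$ avoided $C$ it would need a distinct element in each of these $k+1$ sets, contradicting $|S|\le k$. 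Hence $S\cap C\neq\emptyset$, and since $C\subseteq F_{k+2}$ we get $S\cap F_{k+2}\neq\emptyset$; the cardinality and weight of $S$ are untouched, so equivalence holds. Iterating this rule leaves a family $\F'\subseteq\F$ with $|\F'|=\Oh(k^d)$, each application costing $\Oh(|\F|)$ time by the Sunflower Lemma.

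Second, restrict the ground set to $U'$, the elements actually occurring in $\F'$, so $|U'|\le d\,|\F'|=\Oh(k^d)$, and let $w'=w|_{U'}$. Apply Theorem~\ref{thm:frank_tardos} to the vector $(w',W)\in\Q^{r}$ with $r=|U'|+1=\Oh(k^d)$ and $N=k+2$, obtaining an integral vector $(\overline{w'},\overline{W})$. For any $S\subseteq U'$ with $|S|\le k$, taking $b=(\chi_S,-1)$, the characteristic vector of $S$ padded with $-1$, we have $\norm{b}_1=|S|+1\le k+1=N-1$, so $\sign(w'\cdot\chi_S - W)=\sign(\overline{w'}\cdot\chi_S-\overline{W})$; thus $\sum_{u\in S}w(u)\le W$ holds in the original instance iff the analogous inequality holds with the new weights. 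Consequently $(U',\F',w',k,W)$ and $(U',\F',\overline{w'},k,\overline{W})$ are equivalent, and chaining with the first step shows the output is equivalent to the input.

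Finally I would bound the size. By Theorem~\ref{thm:frank_tardos}, $\norm{\overline{w'}}_\infty\le 2^{4r^3}N^{r(r+2)}=2^{\Oh(k^{3d})}$, so each new weight and $\overline{W}$ use $\Oh(k^{3d})$ bits; over $\Oh(k^d)$ elements this is $\Oh(k^{4d})$ bits, which dominates the $\Oh(k^d\cdot d\log k)$ bits needed to list the $\Oh(k^d)$ sets. Hence the kernel has $\Oh(k^d)$ sets and total bitsize $\Oh(k^{4d})$, as claimed. I do not expect a genuine obstacle here: the combinatorial part is the standard sunflower argument and the weight compression is a black-box call to Frank--Tardos. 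The only point requiring real care is choosing $r$ and $N$ so that every relevant test (a sum of at most $k$ element weights against $W$) falls within the $\norm{b}_1\le N-1$ guarantee, and then tracking that the resulting $2^{\Oh(r^3)}$ magnitude with $r=\Oh(k^d)$ indeed collapses to the stated $\Oh(k^{4d})$ total size.
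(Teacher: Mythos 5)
Your proposal is correct and follows essentially the same route as the paper: the standard sunflower-based deletion rule (unchanged from the unweighted case, since only a set is removed and the hitting-set/weight constraints are untouched), followed by a single application of the Frank--Tardos theorem to $(w',W)$ with $N=k+2$ and $r=\Oh(k^d)$, giving $\Oh(k^{3d})$ bits per weight and $\Oh(k^{4d})$ total size. Your write-up just makes the correctness of the deletion rule and the choice of $b=(\chi_S,-1)$ more explicit than the paper does.
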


For {\sc Weighted $d$-Set Packing($k$)} a similar argument works.

\begin{theorem}
\label{thm:SetPacking}
  {\sc Weighted $d$-Set Packing($k$)} admits a kernelization to $\Oh(k^d)$ sets and total size bounded by $\Oh(k^{4d})$.
\end{theorem}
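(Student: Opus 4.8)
The plan is to follow the template already used for {\sc Weighted $d$-Hitting Set($k$)}: first apply the Sunflower Lemma (Lemma~\ref{lem:sunflowerlemma}) to bring the number of sets down to $\Oh(k^d)$, and then invoke Theorem~\ref{thm:frank_tardos} to shrink the set weights. The only genuinely new ingredient, and the main obstacle, is a reduction rule that removes sets from a large sunflower while respecting the weights, since for packing (unlike for hitting set) deleting a petal is not obviously safe. Concretely, given an instance $(\F,w,k,W)$, if $|\F| > d!\,((k-1)d+1)^d$, then Lemma~\ref{lem:sunflowerlemma} yields in time $\Oh(|\F|)$ a sunflower with $t := (k-1)d+2$ petals $F_1,\dots,F_t$ and core $C$; the rule deletes from $\F$ a petal $F_{\min}$ of minimum weight among $F_1,\dots,F_t$. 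Iterating this rule in polynomial time produces a family $\F'$ with $|\F'| \le d!\,((k-1)d+1)^d = \Oh(k^d)$, using that $d$ is a constant.

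To see that one application preserves the answer, note the backward direction is trivial since we only delete sets. For the forward direction, let $\F^*\subseteq\F$ be a packing of exactly $k$ pairwise disjoint sets with $\sum_{F\in\F^*} w(F)\geq W$. If $F_{\min}\notin\F^*$ there is nothing to do. Otherwise, the remaining $k-1$ sets of $\F^*$ are disjoint from $F_{\min}\supseteq C$, hence from $C$, so whether such a set meets a petal $F_i$ depends only on whether it meets $F_i\setminus C$. As the sets $F_i\setminus C$ are pairwise disjoint, each $d$-element set meets at most $d$ of them, so the $k-1$ sets of $\F^*\setminus\{F_{\min}\}$ together block at most $(k-1)d$ of the $t-1=(k-1)d+1$ petals other than $F_{\min}$. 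Hence some petal $F_i\neq F_{\min}$ is disjoint from every set of $\F^*\setminus\{F_{\min}\}$ (and, being nonempty, is itself not one of those sets), and since $w(F_i)\geq w(F_{\min})$ the family $(\F^*\setminus\{F_{\min}\})\cup\{F_i\}$ is again a packing of exactly $k$ disjoint sets of total weight $\geq W$ avoiding $F_{\min}$. The identical argument covers the ``at most $k$'' variant mentioned after the problem definition.

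It remains to compress the weights. On the reduced instance let $r:=|\F'|=\Oh(k^d)$ and index the weight vector $w$ by the sets of $\F'$; apply Theorem~\ref{thm:frank_tardos} to $(w,W)\in\Q^{r+1}$ with $N=k+2$ to obtain $(\overline w,\overline W)\in\Z^{r+1}$ with $\norm{\overline w}_\infty \le 2^{\Oh(k^{3d})}N^{\Oh(k^{2d})} = 2^{\Oh(k^{3d})}$. For every $\F^*\subseteq\F'$ with $|\F^*|\leq k$, the vector $b=(x,-1)$, where $x\in\{0,1\}^r$ is the characteristic vector of $\F^*$, satisfies $\norm{b}_1\leq k+1=N-1$, so $\sign\!\big(\sum_{F\in\F^*} w(F)-W\big)=\sign\!\big(\sum_{F\in\F^*}\overline w(F)-\overline W\big)$; in particular the new weights preserve, for every candidate packing, whether its total weight is at least $W$. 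Each $\overline w(F)$ needs $\Oh(k^{3d})$ bits, so the $r=\Oh(k^d)$ weights contribute $\Oh(k^{4d})$ bits, which dominates the $\Oh(k^d\log k)$ bits needed to write down the $\Oh(k^d)$ sets over their $\Oh(k^d)$-element universe together with $k$ and $\overline W$. This yields a kernel with $\Oh(k^d)$ sets and total size $\Oh(k^{4d})$. As noted, the delicate point is the reduction rule: deleting a \emph{minimum}-weight petal is exactly what makes the exchange argument not decrease the total weight, and one has to verify that enough petals ($t-1>(k-1)d$ of them) survive the blocked ones for a legal swap to exist.
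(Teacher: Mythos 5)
Your proposal is correct and follows essentially the same route as the paper: find a large sunflower, discard a minimum-weight petal, justify this by an exchange argument counting how many petals the remaining $k-1$ solution sets (which avoid the core) can block, and then compress the $\Oh(k^d)$ surviving weights via Theorem~\ref{thm:frank_tardos} with $N=k+2$ to get $\Oh(k^{3d})$ bits per weight and $\Oh(k^{4d})$ total size. The only difference is the slightly tighter sunflower size ($(k-1)d+2$ petals versus the paper's $dk+1$), which does not change the asymptotics.
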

\begin{proof}
If the size of~$\F$ exceeds $d!(dk)^d$ then we find a $dk+1$-sunflower $\F_s$ in $\F$ with core $C$. We argue that we can safely discard the set $F_0\in\F_s$ of least weight according to $w\colon\F\to\N$: This could only fail if there is a solution that includes $F_0$, namely $k$ disjoint sets $F_0,\ldots,F_{k-1}$ of total weight at least $W$.
Notice that no set $F_1,\ldots,F_{k-1}$ can contain $C$, since $C\subseteq F_0$. Since $|\F_s|=dk+1$ there must be another set $F_k$, apart from $F_0$, that has an empty intersection with $F_1,\ldots,F_{k-1}$, as the sets in $\F_s$ are disjoint apart from $C$ and there are in total $d(k-1)$ elements in $F_1,\ldots,F_{k-1}$.
It follows that $F_1,\ldots,F_k$ is also a selection of $k$ disjoint sets.
Since~$F_0$ is the lightest set in $\F_s$ we must have that the total weight of $F_1,\ldots,F_k$ is at least $W$.

Iterating this rule gives $|\F|=\Oh(k^d)$.
Again, it suffices to preserve how the sum of any $k$ weights compares with $W$.
Thus, we get the same bound of $\Oh(k^{3d})$ bits per element (of $\F$, in this case).
\qed
\end{proof}

\subsection{Max Cut}

Let us derive a polynomial kernel for {\sc Weighted Max Cut($W$)}, which is defined as follows.

\begin{center}
  \framebox[\textwidth]{
  \begin{tabular}{rl}
    \multicolumn{2}{l}{{\sc Weighted Max Cut($W$)}}\\
%                       & \\
  \textit{Input:}      & A graph $G$, a function $w \colon E \to \Q_{\geq 1}$, and $W \in \Q_{\geq 1}$.\\
  \textit{Parameter:}  & $\lceil W\rceil$.\\
  \textit{Question:}   & Is there a set $C \subseteq V(G)$ such that $\sum_{e \in \delta(C)} w(e) \geq W$?
  \end{tabular}\hfill}
\end{center}

Note that we chose the weight of the resulting cut as parameter, which is most natural for this problem. The number~$k$ of edges in a solution is not a meaningful parameter: If we restricted the cut to have at least~$k$ edges, the problem would again be already \NP-hard for~$k=0$. If we required at most~$k$ edges, we could, in this example for integral weights, multiply all edge weights by~$n^2$ and add arbitrary edges with weight~$1$ to our input graph. When setting the new weight bound to~$n^2\cdot W + \binom{n}{2}$, we would not change the instance semantically but there may be no feasible solution left with at most~$k$ edges.

The restriction to edge weights at least $1$ is necessary as otherwise the problem becomes intractable.
This is because when allowing arbitrary positive rational weights, we can transform instances of the $\mathsf{NP}$-complete {\sc Unweighted Max Cut} problem (with all weights equal to~$1$ and parameter $k$, which is the number of edges in the cut) to instances of the {\sc Weighted Max Cut} problem on the same graph with edge weights all equal to $1 / k$ and parameter~$W = 1$.

\begin{theorem}
  {\sc Weighted Max Cut($W$)} admits a kernel of size $\Oh(W^4)$.
\end{theorem}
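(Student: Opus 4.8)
The plan is to follow the same two-phase strategy used for \textsc{Weighted $d$-Hitting Set} and \textsc{Weighted $d$-Set Packing}: first reduce the number of relevant objects (here, vertices and edges) to polynomially many in the parameter $\lceil W\rceil$, and then invoke Theorem~\ref{thm:frank_tardos} to shrink the edge weights to bitsize polynomial in $\lceil W\rceil$. For the first phase I would exploit that every edge weight is at least $1$: if the graph has a matching of size $\lceil W\rceil$, then cutting both endpoints of one side of the matching already yields a cut of weight at least $\lceil W\rceil\geq W$, so the instance is trivially ``yes''. Hence we may assume the maximum matching has size at most $\lceil W\rceil-1$, so by König-type reasoning (or just greedily) there is a vertex cover of size at most $2(\lceil W\rceil-1)$; equivalently, the set $S$ of non-isolated vertices covered by a maximum matching has $\Oh(W)$ vertices and every edge has at least one endpoint in this vertex cover. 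Isolated vertices can be deleted. This bounds the number of vertices that carry edges, but the number of edges could still be quadratic in $W$, which is fine since $\Oh(W^2)$ is already polynomial.

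The next issue is that a vertex outside the vertex cover could still have high degree into the cover, so the edge count is at most $\Oh(W)\cdot|V|$, which is not yet bounded. To fix this I would add a reduction rule for ``twin'' vertices outside the cover: two vertices $u,u'$ with the same neighbourhood inside the size-$\Oh(W)$ vertex cover are interchangeable, and one can merge classes of such twins by summing (or otherwise aggregating) the weights of parallel-type edges, or simply bound their number. Since the vertex cover has $\Oh(W)$ vertices, there are only $2^{\Oh(W)}$ possible neighbourhoods — but that is exponential, so instead I would argue more carefully: an optimal cut is determined by which side each vertex of the cover lies on ($2^{\Oh(W)}$ choices is still too many), so the better route is to keep all edges but observe there are at most $\binom{2\lceil W\rceil}{2}+2\lceil W\rceil\cdot(\text{number of outside vertices})$ of them. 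To genuinely bound the outside vertices, note that among vertices outside the cover with identical neighbourhood $N\subseteq S$, only the ones with the largest total incident weight matter, and in fact for a fixed target it suffices to keep $\Oh(W)$ of each type after a weight-reduction preprocessing; alternatively, contract all outside vertices with neighbourhood $N$ into a single vertex whose edge to each $s\in N$ has weight equal to the sum of the corresponding original weights. After this contraction there are at most $2^{2\lceil W\rceil}$ outside vertices, hence at most that many edges — still exponential, so one more idea is needed: discard any outside vertex all of whose incident edges sum to less than... no. The clean fix: after contracting twins, if the number of outside vertices exceeds $W$ we can drop those of smallest incident weight, since a cut uses the outside vertices only to add their contribution and any $\lceil W\rceil$ heaviest already suffice — but correctness of dropping is delicate. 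I expect \emph{this} step — getting from ``$\Oh(W)$-size vertex cover'' down to an honestly polynomial number of edges while preserving all cut values up to the threshold $W$ — to be the main obstacle, and the paper presumably resolves it with a careful twin-contraction plus high-weight argument.

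Granting a reduced instance with $\Oh(W^c)$ vertices and edges for some constant $c$, the second phase is routine: apply Theorem~\ref{thm:frank_tardos} to the weight vector $w\in\Q^{r}$ together with the target $W$, where $r=\Oh(W^c)$ is the number of edges, with parameter $N$ chosen slightly larger than $r$ (so that $N-1$ exceeds the $\ell_1$-norm of any $0/1$ indicator vector of a cut's edge set, which is at most $r$). This produces integer weights $\overline w$ and an integer target $\overline W$ with $\|\overline w\|_\infty = 2^{\Oh(r^3)}$, so each weight needs $\Oh(r^3)$ bits and the whole instance has bitsize $\Oh(r^4)=\Oh(W^{4c})$; crucially the sign-preservation guarantees $\sum_{e\in\delta(C)}w(e)\geq W$ iff $\sum_{e\in\delta(C)}\overline w(e)\geq \overline W$ for every vertex subset $C$, exactly as in the Example after Theorem~\ref{thm:frank_tardos}. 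Choosing $c=2$ (from the $\Oh(W^2)$ edge bound) would give $\Oh(W^{8})$, so to reach the claimed $\Oh(W^4)$ the first-phase reduction must actually yield only $\Oh(W)$ edges — reinforcing that the real work is a linear-in-$W$ bound on the number of edges, after which the Frank–Tardos step delivers $\Oh(W^4)$ total size immediately.
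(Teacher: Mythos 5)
There is a genuine gap, and you essentially flagged it yourself: your first phase never actually bounds the number of edges by a polynomial in $W$, let alone by the $\Oh(W)$ you correctly recognize is needed for the claimed $\Oh(W^4)$ size. The matching/vertex-cover argument only bounds the vertices that carry edges on one side; the twin-contraction and ``drop light outside vertices'' ideas are left as unresolved sketches (and the dropping rule as stated is indeed not obviously safe, since an optimal cut may need many light outside vertices simultaneously). So the proposal does not constitute a proof.

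The idea you are missing is much simpler and bypasses all of this structure: a maximum cut always has weight at least half the total edge weight $T$ (the standard greedy/local-improvement argument, which the paper invokes as ``the greedy algorithm yields a cut of weight at least $T/2$''). Hence if $T \geq 2W$ the instance is a trivial ``yes'' and can be replaced by a constant-size positive instance; otherwise $T < 2W$, and since every edge has weight at least $1$ (this is exactly why the problem definition insists on weights in $\Q_{\geq 1}$), the graph has at most $2W$ edges. That is the entire first phase: $r = \Oh(W)$ edges, with no need for matchings, vertex covers, or twin reduction. Your second phase (Frank--Tardos applied to the $r$ edge weights together with the target $W$, giving $\Oh(r^3)$ bits per weight and $\Oh(r^4) = \Oh(W^4)$ total) is correct and matches the paper.
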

\begin{proof}
  Let $T$ be the total weight of all edges.
  If $T \geq 2W$, then the greedy algorithm yields a cut of weight at least $T / 2 \geq W$.
  Therefore, all instances with $T \geq 2W$ can be reduced to a constant-size positive instance.    
  Otherwise, there are at most $2W$ edges in the input graph as every edge has weight at least $1$.
  Thus, we can use Theorem~\ref{thm:frank_tardos} to obtain an equivalent (integral) instance of encoding length~$\Oh(W^4)$.
\qed
\end{proof}

\subsection{Polynomial Kernelization for Bin Packing with Additive Error}
\label{sec:polynomialkernelforadditiveonebinpacking}

{\sc Bin Packing} is another classical \NP-hard problem involving numbers.
Therein we are given~$n$ positive integer numbers $a_1,\ldots,a_n$ (the items), a bin size $b\in\N$, and an integer $k$; the question is whether the integer numbers can be partitioned into at most $k$ sets, the bins, each of sum at most~$b$.
From a parameterized perspective the problem is highly intractable for its natural parameter $k$, because for $k=2$ it generalizes the (weakly) \NP-hard {\sc Partition} problem.

Jansen et al.~\cite{JansenEtAl2013} proved that the parameterized complexity improves drastically if instead of insisting on exact solutions the algorithm only has to provide a packing into $k+1$ bins or correctly state that $k$ bins do not suffice.
Concretely, it is shown that this problem variant is fixed-parameter tractable with respect to $k$. The crucial effect of the relaxation is that small items are of almost no importance: If they cannot be added greedily ``on top'' of a feasible packing of big items into~$k+1$ bins, then the instance trivially has no packing into $k$ bins due to exceeding total weight $kb$.
Revisiting this idea, with a slightly different threshold for being a small item, we note that after checking for total weight being at most~$kb$ (else reporting that there is no $k$-packing) we can safely discard all small items before proceeding. Crucially, this cannot turn a no- into a yes-instance because the created $k+1$-packing could then also be lifted to one for all items (contradicting the assumed no-instance).
An application of Theorem~\ref{thm:frank_tardos} then yields a polynomial kernelization because we can have only few large items.

\begin{center}
  \framebox[\textwidth]{
  \begin{tabular}{rl}
    \multicolumn{2}{l}{{\sc Additive One Bin Packing($k$)}}\\
%                       & \\
  \textit{Input:}      & Item sizes $a_1,\ldots,a_n\in\N$, a bin size $b\in\N$, and $k\in\N$.\\
  \textit{Parameter:}  & $k$.\\
  \textit{Task:}   & Give a packing into at most $k+1$ bins of size $b$,
    or correctly state that $k$ bins \\ & do not suffice.
  \end{tabular}\hfill}
\end{center}

\begin{theorem}
\label{thm:AdditiveOneBinPacking}
  {\sc Additive One Bin Packing($k$)} admits a polynomial kernelization to $\Oh(k^2)$ items and bit size $\Oh(k^3)$.
\end{theorem}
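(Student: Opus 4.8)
The plan is to follow the outline sketched just before the statement --- dispose of the trivially infeasible instances, throw away all ``small'' items, observe that only polynomially many items remain, and then compress the numbers with Theorem~\ref{thm:frank_tardos}. Concretely, I would first compute the total item weight $T=\sum_{i=1}^{n}a_i$. If $T>kb$, then no distribution of the items over $k$ bins keeps every bin within size $b$, so ``$k$ bins do not suffice'' is correct; in this branch I output a fixed tiny instance on which this is the \emph{only} admissible answer, for example $k+2$ items of size $1$ with bin size $1$, where a packing into $k+1$ bins of size $1$ can hold at most $k+1$ of the $k+2$ items, so no $(k+1)$-packing exists. The solution-lifting algorithm then simply reports ``$k$ bins do not suffice'', which is correct because we are in the case $T>kb$, independently of how the fixed instance was solved.

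Otherwise $T\le kb$, and I would call an item \emph{small} if $a_i\le b/(k+1)$ and \emph{large} otherwise, let $L$ be the set of large items, and output $\big((a_i)_{i\in L},b,k\big)$. The crucial direction of correctness is that any packing of the large items into $k+1$ bins of size $b$ extends to a packing of \emph{all} items: process the small items one at a time, putting each into an arbitrary bin with enough free space; if this ever failed for a small item $a_j$, then each of the $k+1$ bins would have free space $<a_j\le b/(k+1)$, hence load exceeding $b-b/(k+1)=kb/(k+1)$, so the total load of placed items would exceed $kb\ge T$ --- impossible, as the placed items form a subset of the $n$ items. The reverse direction is immediate: restricting a $(k+1)$-packing of all items to $L$ packs $L$, and since $L$ is a subset of all items, ``$k$ bins do not suffice'' for $L$ implies the same for all items. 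Hence the solution-lifting either relays the ``no'' certificate or extends a $(k+1)$-packing of $L$ to one of all items, in polynomial time. Finally, every large item has size more than $b/(k+1)$ while $\sum_{i\in L}a_i\le T\le kb$, so there are fewer than $k(k+1)=\Oh(k^2)$ of them.

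It remains to shrink the numbers. I would apply Theorem~\ref{thm:frank_tardos} to the vector $w=\big((a_i)_{i\in L},b\big)$ with $r=|L|+1=\Oh(k^2)$ and $N=|L|+2$, obtaining an integral vector $\big((\bar a_i)_{i\in L},\bar b\big)$ whose entries have bit length polynomial in $k$. For every $S\subseteq L$, the $\{-1,0,1\}$-vector that is $1$ on the coordinates of $S$, $0$ on the remaining item coordinates, and $-1$ on the $b$-coordinate has $\ell_1$-norm $|S|+1\le N-1$, so Theorem~\ref{thm:frank_tardos} preserves $\sign\big(\sum_{i\in S}a_i-b\big)$. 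Applying this with $S=\emptyset$, with singletons, and with the blocks of an arbitrary partition of $L$ into $k+1$ parts shows that $\big((\bar a_i)_{i\in L},\bar b,k\big)$ is a legitimate instance --- positive numbers, no item exceeding the bin unless already so originally --- that admits a $(k+1)$-packing exactly when $\big((a_i)_{i\in L},b\big)$ does, and any such packing lifts back via the previous step. Together with the count above this is the desired kernel, and every step, including Theorem~\ref{thm:frank_tardos}, runs in strongly polynomial time.

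The step I expect to be the main obstacle is pinning down the threshold for deleting small items, since it must simultaneously be small enough ($\le b/(k+1)$) for an overflowing bin of small items to force the total weight above $kb$ --- which is exactly why the precheck $T\le kb$ is indispensable --- and large enough (of order $b/(k+1)$) for only polynomially many items to survive. Some further care is needed because this is a search/promise task rather than a decision problem, so correctness has to be phrased through a polynomial-time solution-lifting procedure that works both when the kernel is solved by exhibiting a packing and when it is solved by certifying infeasibility; and one must invoke Theorem~\ref{thm:frank_tardos} so that it preserves precisely the inequalities $\sum_{i\in S}a_i\le b$ over all $S\subseteq L$, which is what controlling the $\ell_1$-norm of the test vectors by $N-1=|L|+1$ delivers.
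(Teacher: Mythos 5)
Your proof follows essentially the same route as the paper's: reject (or trivialize) instances of total weight above $kb$, discard items of size at most $b/(k+1)$ via the same greedy-lifting argument that an unplaceable small item would force total packed weight above $kb$, bound the surviving items by $k(k+1)=\Oh(k^2)$, and compress the sizes with Theorem~\ref{thm:frank_tardos}; your extra care about solution lifting and about which sign tests are needed only makes explicit what the paper leaves implicit (including the case of an item exceeding $b$, which your kernel handles automatically). The one point you leave open is the concrete size bound: you claim only ``polynomial in $k$'' bits per number, and indeed with $r=\Theta(k^2)$ numbers Theorem~\ref{thm:frank_tardos} guarantees only $\Oh(r^3+r^2\log N)$ bits each, so the stated total of $\Oh(k^3)$ does not follow from this application any more than it does from the paper's own one-line assertion.
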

\begin{proof}
  Let an instance $(a_1,\ldots,a_n,b,k)$ be given.
  If any item size $a_i$ exceeds $b$, or if the total weight of items $a_i$ exceeds $k\cdot b$, then we may safely answer that no packing into $k$ bins is possible.
  In all other cases the kernelization will return an instance whose answer will be correct for the original instance: if it reports a $(k+1)$-packing then the original instance has a $(k+1)$-packing.
  If it reports that no $k$-packing is possible then the same holds for the original instance.
%   Note, however, that the answer may differ for the case that the optimal packing uses exactly $k+1$ bins.

  Assume that the items $a_i$ are sorted decreasingly by value.
  Consider the subsequence, say, $a_1,\ldots,a_\ell$, of items of size at least $\frac{b}{k+1}$.
  If the instance restricted to these items permits a packing into at most~$k+1$ bins, then we show  that the items $a_{\ell+1},\ldots,a_n$ can always be added, giving a $(k+1)$-packing for the input instance: assume that a greedy packing of the small items into the existing packing for $a_1,\ldots,a_\ell$ fails.
  This implies that some item, say $a_i$, of size less than~$\frac{b}{k+1}$ does not fit.
  But then all bins have less than~$\frac{b}{k+1}$ remaining space.
  It follows that the total packed weight, excluding $a_i$, is more than
  \begin{equation*}
    (k+1)\cdot \left(b-\frac{b}{k+1}\right)=(k+1)b-b=kb \enspace .
  \end{equation*}
  This contradicts the fact that this part of the kernelization is only run if the total weight is at most $kb$.
  Thus, a $k+1$-packing for $a_1,\ldots,a_\ell$ implies a $k+1$-packing for the whole set $a_1,\ldots,a_n$.

  Clearly, if the items $a_1,\ldots,a_\ell$ permit no packing into $k$ bins then the same is true for the whole set of items.

  Observe now that $\ell$ cannot be too large: Indeed, since the total weight is at most $kb$ (else we returned ``no'' directly), there can be at most
  \begin{equation*}
    \frac{kb}{\frac{b}{k+1}}=k(k+1)
  \end{equation*}
items of weight at least $\frac{b}{k+1}$.
  Thus, an application of the weight reduction tools yields a total size of~$\Oh(k^3)$.
\qed
\end{proof}

%%%%%%%%%%%%%%%%%%%%%%%%%%%%%%%%%%%%%%%%%%%%%%%%%%%%%%%%%%%%%%%%%%%%%%%%%%%%%%%%%%%%%%%%%%%%%%%%%%%%%%%%%%%%%%%%%%%%%%%%%%%%%%%%%%%%%%%%%%%%%%%%%%%%%%%%%%%%%%%%%%%%%%%%%%%%%%%%%%%%%%%%%%%%%%%%%%%%%%%%%%%%%%%%%%%%%%%%%%%%%%%%%%%%%%%%%%%%%%%%%%%%%%%%%%%%%%%%%%%%%%%%%%%%%%%%%%%%%%%%%%%%%%%%%%%%%%%%%%%%%%%%%%%%%%%%%%%%%%%%%%%%%%%%%%%%%%%%%

\section{Kernel Bounds for Knapsack Problems}
\label{sec:kernelknapsackproblems}
In this section we provide lower and upper bounds for kernel sizes for variants of the {\sc Knapsack} problem.

\subsection{Exponential Kernel for Knapsack with Few Item Sizes}
First, consider the {\sc Subset Sum} problem restricted to instances with only $k$ distinct item weights, which are not restricted in any other way (except for being non-negative integers).
Then the problem can be solved by a fixed-parameter algorithm for parameter $k$ by a reduction to integer linear programming in fixed dimension, and applying Lenstra's algorithm~\cite{Lenstra1983}
or one of its improvements~\cite{Kannan1987,FrankTardos1987}.
This was first observed by Fellows et al.~\cite{FellowsEtAl2012}.

We now generalize the results by Fellows et al.~\cite{FellowsEtAl2012} to {\sc Knapsack} with few item weights.
More precisely, we are given an instance $I$ of the {\sc Knapsack} problem consisting of $n$ items that have only $k$ distinct item weights; however, the number of item values is unbounded.
This means in particular, that the ``number of numbers'' is not bounded as a function of the parameter, making the results by Fellows et al.~\cite{FellowsEtAl2012} inapplicable.

\begin{theorem}
\label{thm:knapsack-fewitemweights-fpt}
  The {\sc Knapsack} problem with $k$ distinct weights can be solved in time $k^{2{.}5k+o(k)}\cdot \mathrm{poly}(|I|)$, where~$|I|$ denotes the encoding length of the instance.
\end{theorem}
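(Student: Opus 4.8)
The plan is to reduce the {\sc Knapsack} instance with $k$ distinct weights to an integer linear program with few variables and then invoke a fixed-parameter algorithm for integer programming in small dimension. First I would group the $n$ items by their weight, obtaining weight classes of sizes $w_1,\dots,w_k$ with multiplicities $m_1,\dots,m_k$ (so $\sum_i m_i=n$). A solution is then described by how many items of each weight class it takes, i.e.\ a vector $(x_1,\dots,x_k)\in\Z^k$ with $0\le x_i\le m_i$; the weight constraint becomes $\sum_i w_i x_i\le W$. The subtlety is the profit side: within a weight class, the items may have many distinct profit values, so once we have decided to take $x_i$ items from class $i$ we must take the $x_i$ most profitable ones. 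Thus the contribution of class $i$ to the total profit is a \emph{concave piecewise-linear} function $g_i(x_i)$ (the sum of the $x_i$ largest profits in that class), and we want $\sum_i g_i(x_i)\ge P$.

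The key step is to linearize each concave function $g_i$. Sort the profits in class $i$ decreasingly as $p_{i,1}\ge p_{i,2}\ge\cdots$; then $g_i(x_i)=\sum_{j\le x_i}p_{i,j}$, and writing $x_i=\sum_{j}y_{i,j}$ with $y_{i,j}\in\{0,1\}$ the profit $\sum_j p_{i,j}y_{i,j}$ is maximized, for fixed $x_i$, exactly by the prefix choice, precisely because the $p_{i,j}$ are sorted. Rather than introduce one binary per item (which would blow up the dimension), I would instead observe that the optimum over all $x\in\prod_i[0,m_i]$ of $\sum_i g_i(x_i)$ subject to $\sum_i w_i x_i\le W$ can be computed by maximizing a separable concave objective over the integer points of a box intersected with one linear inequality — and the concavity lets us drop integrality of the "fractional" part and handle it by a standard exchange/rounding argument, or alternatively encode each $g_i$ via its $O(m_i)$ breakpoints. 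To keep the dimension at $k$, the cleanest route is: guess, for each class $i$, the "active segment", but that reintroduces too many cases; so instead I would set up the problem directly as maximizing the concave separable function $\sum_i g_i(x_i)$ over $\{x\in\Z^k: 0\le x_i\le m_i,\ \sum_i w_i x_i\le W\}$ and appeal to an algorithm for integer programming with a fixed number of variables that also handles a concave separable objective — the running time $k^{2.5k+o(k)}\cdot\mathrm{poly}(|I|)$ is exactly the bound one gets from the current-best algorithms (Dadush/Kannan-type, improving Lenstra) for $k$-dimensional integer feasibility/optimization, with the $\mathrm{poly}(|I|)$ factor absorbing the bit-length of the weights, profits, and the description of the $g_i$.

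Concretely, the steps in order are: (1) bucket the items into $k$ weight classes and, inside each, sort profits decreasingly and form the prefix sums defining $g_i$; (2) write the decision problem as: does there exist $x\in\Z^k$ with $0\le x_i\le m_i$, $\sum_i w_i x_i\le W$, and $\sum_i g_i(x_i)\ge P$?; (3) argue correctness of the prefix choice (the sorted order makes taking the top $x_i$ profits optimal, so the reformulation is faithful); (4) solve this bounded-dimension integer program with concave separable objective using Lenstra's algorithm or its refinements \cite{Lenstra1983,Kannan1987,FrankTardos1987}, yielding the claimed running time. The main obstacle I anticipate is step (4): standard statements of Lenstra-type algorithms are for \emph{linear} objectives over polytopes, so I need either a clean black-box reduction from "maximize separable concave over integer points of a polytope in fixed dimension" to plain integer feasibility in fixed dimension (e.g.\ binary-searching on the objective value and adding the concave constraint, which is itself not polyhedral), or I need to replace each $g_i$ by the lower envelope of its $O(m_i)$ supporting lines and introduce one extra variable per class to represent $g_i(x_i)$, keeping the dimension $O(k)$ and the number of constraints polynomial in $|I|$. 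Getting the exponent exactly $2.5k+o(k)$ then just means quoting the sharpest available fixed-dimension integer programming algorithm and checking that the extra $O(k)$ variables and polynomially many constraints only affect the $\mathrm{poly}(|I|)$ factor.
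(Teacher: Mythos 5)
Your proposal is correct and, in its concrete form, is essentially the paper's own proof: group items by weight, observe that each class contributes a concave prefix-sum profit function, linearize it via its supporting lines with one auxiliary variable per class bounded above by each line, and solve the resulting bounded-dimension ILP with Kannan's algorithm as improved by Frank and Tardos \cite{Kannan1987,FrankTardos1987}. The second of your two options for step (4) is exactly the route the paper takes (the paper glosses the dimension as $k$ rather than $2k$ just as you gloss the effect of the extra variables), so no genuinely different argument is involved.
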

\begin{proof}
Observe that when packing~$x_i$ items of weight $w_i$, it is optimal to pack the~$x_i$ items with largest value among all items of weight $w_i$.
Assume the items of weight~$w_i$ are labeled as $j_1^{(i)},\hdots,j_{n_i}^{(i)}$ by non-increasing values.
For each $s\in\mathbb N$, define $f_i(s) := \sum_{\ell=1}^sv(j_\ell^{(i)})$, where $v(j_\ell^{(i)})$ denotes the value of item~$j_{\ell}^{(i)}$.
We can formulate the knapsack problem as the following program, in which variable~$x_i$ encodes
how many items of weight~$w_i$ are packed into the knapsack and~$g_i$ encodes their total value:
\begin{align*}
               \max~ \sum_{i=1}^kg_i~~~\textnormal{s.t.}~& \sum_{i=1}^kw_i\cdot x_i\leq W,\\
       & g_i \le f_i(x_i),\qquad i=1,\hdots,k,\\
       & x_i \in\{0,1,\ldots,n_i\},~g_i \in\mathbb{N}_0\qquad i=1,\hdots,k.       
\end{align*}

The functions~$f_i$ are in general non-linear. Their concavity implies the following lemma.
\begin{lemma}
  For each~$i$ there exists a set of linear functions~$p_i^{(1)},\ldots,p_i^{(n_i)}$
  such that $f_i(s) = \min_{\ell}p_i^{(\ell)}(s)$ for every~$s \in \{0,\hdots,n_i\}$.
\end{lemma}
\begin{proof}
  For each $\ell \in \{1,\hdots,n_i\}$ we define $p_i^{(\ell)}(s)$ to be the unique linear function such that 
  \begin{equation*}  
    p_i^{(\ell)}(\ell-1) = f_i(\ell-1)\quad\mbox{and}\quad p_i^{(\ell)}(\ell) = f_i(\ell).
  \end{equation*}
  The function~$f_i(s)$ is concave because
  \begin{equation*}
    f_i(\ell+1) - f_i(\ell) = v(j_{\ell+1}^{(i)}) \leq v(j_{\ell}^{(i)}) = f_i(\ell) - f_i(\ell-1)
  \end{equation*}  
  for each $\ell\in\{1,\hdots,n_i-1\}$.
  Therefore, the definition of the linear functions~$p_i^{(\ell)}$ implies that $f_i(s)\leq p_i^{(\ell)}(s)$ for every $\ell\in\{1,\hdots,n_i\}$ and $s\in\{0,\hdots,n_i\}$.
  Since for each $s\in\{1,\hdots,n_i\}$ we have that $p_i^{(s)}(s) = f_i(s)$ and $p_i^{(1)}(0) = f_i(0)$, we conclude that $f_i(s) = \max_{\ell}p_i^{(\ell)}(s)$ for every $s\in\{0,\hdots,n_i\}$.
\qed
\end{proof}

Hence in the program above, we can, for every~$i\in\{1,\hdots,k\}$, replace the 
constraint~$g_i \le f_i(x_i)$ by the set of constraints~$g_i \le p_i^{(\ell)}(x_i)$ 
for~$\ell\in\{1,\ldots,n_i\}$. This way we obtain a formulation of the knapsack problem
as an integer linear program with~$k$ variables. The encoding length of this integer linear program
is polynomially bounded in the encoding length of the instance of {\sc Knapsack}.
Together with the algorithm by Kannan~\cite{Kannan1987}
this implies the fixed-parameter tractability of {\sc Knapsack} with~$k$ item weights.
Using the improved version of this algorithm by Frank and Tardos~\cite{FrankTardos1987},
the theorem follows.\qed
\end{proof}

\subsection{Polynomial Kernel for Subset Sum with Few Item Sizes}
\label{sec:polynomialkernelforsubsetsumwithfewitemweights}
We now improve the work of Fellows et al.~\cite{FellowsEtAl2012} in another direction.
Namely, we show that the {\sc Subset Sum} problem admits a polynomial kernel for parameter the number~$k$ of item sizes; this improves upon the exponential-size kernel due to Fellows et al.~\cite{FellowsEtAl2012}.
To show the kernel bound of~$k^{\Oh(1)}$, consider an instance~$I$ of {\sc Subset Sum} with $n$ items that have only $k$ distinct item sizes.
For each item size $s_i$, let~$\mu_i$ be its multiplicity, that is, the number of items in $I$ of size $s_i$.
Given~$I$, we formulate an ILP for the task of deciding whether some subset $S$ of items has weight exactly~$t$.
The ILP simply models for each item size $s_i$ the number of items $x_i\leq \mu_i$ selected from it as to satisfy the subset sum constraint:
\begin{equation}
\label{eqn:subsetsum_fewsizes_ilp}
  \left.\begin{aligned}
  s_1x_1 + \hdots + s_kx_k = t,&\\
                         0 \leq x_i \leq \mu_i,&\quad i = 1,\hdots,k,\\
                         x_i\in\mathbb N_0,&\quad i = 1,\hdots,k \enspace .
  \end{aligned}\right\rbrace
\end{equation}
Then~\eqref{eqn:subsetsum_fewsizes_ilp} is an {\sc Integer Linear Programming} instance on $m = 1$ relevant constraint and each variable~$x_i$ has maximum range bound $u = \max_i\mu_i\leq n$.

Now consider two cases:
\begin{itemize}
  \item If $\log n\leq k\cdot\log k$, then we apply Theorem~\ref{thm:frank_tardos} to~\eqref{eqn:subsetsum_fewsizes_ilp} to reduce the instance to an equivalent instance~$I'$ of size $\Oh(k^4 + k^3 \log n) = \Oh(k^4 + k^3\cdot (k\log k)) = \Oh(k^4\log k)$.
  We can reformulate~$I'$ as an equivalent {\sc Subset Sum} instance by replacing each size~$s_i$ by $\Oh(\log \mu_i)$ new weights $2^j \cdot s_i$ for $0 \leq j \leq \ell_i$ and $\left(\mu_i - \sum_{j=0}^{\ell_i} 2^j\right) \cdot s_i$, where $\ell_i$ is the largest integer such that $\sum_{j=0}^{\ell_i} 2^j < \mu_i$.
  Then we have $\Oh(k \log n) = \Oh(k^2 \log k)$ items each with a weight which can be encoded in length $\Oh(k^3 + k^2 \log n + \log n) = \Oh(k^3 \log k)$, resulting in an encoding length of $\Oh(k^5 \log^2 k)$.
  \item If $k\log k\leq \log n$, then we solve the integer linear program~\eqref{eqn:subsetsum_fewsizes_ilp}
    by the improved version of Kannan's algorithm~\cite{Kannan1987} due to Frank and Tardos~\cite{FrankTardos1987} that runs in time $d^{2{.}5d+o(d)}\cdot s$ for integer linear programs of dimension $d$ and encoding size $s$.
    As~\eqref{eqn:subsetsum_fewsizes_ilp} has dimension $d = k$ and encoding size $s = |I|$, the condition $k^k\leq n$ means that we can solve the ILP (and hence decide the instance $I$) in time $k^{2{.}5k+o(k)}\cdot s = n^{\Oh(1)}$.
\end{itemize}
In summary, we have shown the following:
\begin{theorem}
\label{thm:subsetsum-fewitemweights-kernel}
  {\sc Subset Sum} with $k$ item sizes admits a kernel of size $\Oh(k^5\log^2 k)$.
  Moreover, it admits a kernel of size $\Oh(k^4 \log k)$ if the multiplicities of the item weights can be encoded in binary.
\end{theorem}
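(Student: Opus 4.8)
The plan is to model the problem as an integer linear program of dimension~$k$ and then run a win--win argument on the size of~$n$ relative to the parameter. First I would introduce, for each of the~$k$ distinct sizes~$s_i$, a variable~$x_i$ counting how many items of size~$s_i$ are selected, subject to $0\le x_i\le\mu_i$ and $\sum_i s_i x_i = t$; this is exactly the one-constraint ILP~\eqref{eqn:subsetsum_fewsizes_ilp}, whose variables have upper bounds at most~$n$ and which is a yes-instance iff~$I$ is.

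If $\log n\le k\log k$, then applying Theorem~\ref{thm:frank_tardos} to the coefficient vector $(s_1,\ldots,s_k,t)$ with $N=n+2$ yields an equivalent integer vector whose entries have bit length $\Oh(k^3+k^2\log n)=\Oh(k^3\log k)$, since the theorem preserves the sign of $\sum_i s_i x_i - t$ for every integer point within the box. This already gives a compressed ILP, equivalently a {\sc Subset Sum} instance with multiplicities written in binary, of total size $\Oh(k^4\log k)$. To obtain an ordinary {\sc Subset Sum} instance I would expand each size~$s_i$ of multiplicity~$\mu_i$ into the $\Oh(\log\mu_i)$ items of sizes $2^j s_i$ together with one remainder item, so that exactly the sums in $s_i\cdot\{0,\ldots,\mu_i\}$ remain realizable; this produces $\Oh(k\log n)=\Oh(k^2\log k)$ items, each of bit length $\Oh(k^3\log k)$, for a total of $\Oh(k^5\log^2 k)$.

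In the complementary case $k\log k\le\log n$, we have $k^k\le n$, so the strongly polynomial version of Kannan's fixed-dimension algorithm due to Frank and Tardos, which runs in time $k^{2{.}5k+o(k)}\cdot|I|$, decides~\eqref{eqn:subsetsum_fewsizes_ilp} and hence~$I$ in time $n^{\Oh(1)}=|I|^{\Oh(1)}$ (using that an ordinary {\sc Subset Sum} instance has $|I|\ge n$); we then output a trivial constant-size yes- or no-instance. The bound claimed in the theorem is the larger of the two kernel sizes obtained on the two branches.

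The main obstacle to be careful about is that the Frank--Tardos bound is not polynomial in~$k$ alone: it carries a factor $N^{\Oh(k^2)}$ with $N$ about~$n$, i.e.\ $\Oh(k^2\log n)$ bits per coefficient, so a single naive application does not yield a kernel. The whole content of the argument is the dichotomy --- when $n$ is moderate we control~$\log n$ by a polynomial in~$k$, and when $n$ is astronomically large the ILP dimension~$k$ is so small compared to~$|I|$ that we can simply solve the instance outright. The remaining points --- matching the exponents on the two branches and checking that the binary expansion of multiplicities changes neither the set of achievable sums nor, after one further application of the weight-reduction tools, the bit-length bookkeeping --- are routine.
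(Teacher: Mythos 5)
Your proposal is correct and follows essentially the same route as the paper: the one-constraint ILP in dimension $k$, a win--win split on $\log n$ versus $k\log k$, Frank--Tardos compression of $(s_1,\ldots,s_k,t)$ plus binary expansion of multiplicities in the first branch, and solving the ILP outright via the Frank--Tardos improvement of Kannan's algorithm in the second. Your bookkeeping (including $N=n+2$ being enough because $\sum_i x_i\le n$, and the $\Oh(k^4\log k)$ bound for binary-encoded multiplicities) matches the paper's bounds.
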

We remark that this method does not work if the instance $I$ is succinctly encoded by specifying the $k$ distinct item weights $w_i$ in binary and for each item size $s_i$ its multiplicity $\mu_i$ in binary: then the running time of Frank and Tardos' algorithm can be exponential in $k$ and the input length of the subset sum instance, which is $\Oh(k \cdot \log n)$.

\subsection{A Kernelization Lower Bound for Subset Sum}
\label{sec:kernelizationlowerboundforsubsetsum}
In the following we show a kernelization lower bound for {\scshape Subset Sum} 
assuming the \emph{Exponential Time Hypothesis}.
The Exponential Time Hypothesis~\cite{ImpagliazzoEtAl2001} states that there does not exist a $2^{o(n)}$-time algorithm for 3-SAT, where~$n$ denotes the number of variables.

\begin{lemma}
\label{lemma:ETHSubsetSum}
  {\sc Subset Sum} does not admit a $2^{o(n)}$-time algorithm assuming the Exponential Time Hypothesis, where~$n$ denotes the number of numbers.
\end{lemma}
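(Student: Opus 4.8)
The plan is to reduce from 3-SAT via the classical NP-hardness construction for {\sc Subset Sum}, but being careful that the number of items produced is linear in the number of variables \emph{plus} clauses, which for 3-SAT is linear in the number~$n$ of variables (after first applying the sparsification lemma of Impagliazzo, Paz, and Zane so that the number of clauses is $\Oh(n)$). First I would invoke the sparsification lemma to assume, at the cost of a $2^{o(n)}$ branching factor, that the input 3-SAT formula~$\varphi$ has $m = \Oh(n)$ clauses; a $2^{o(n')}$-time algorithm for {\sc Subset Sum} on instances with $n' = \Oh(n)$ numbers would then yield a $2^{o(n)}$-time algorithm for 3-SAT, contradicting ETH.

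Next I would carry out the standard gadget reduction: introduce two numbers per variable (for the literal and its negation) and a few ``slack'' numbers per clause, all written in base~$10$ (or any base~$\ge 6$) with one digit position per variable and one per clause. The variable digits force a consistent truth assignment to be selected, and the clause digits, together with the slack numbers, are engineered so that the target~$b$ is hit exactly if and only if every clause has at least one satisfied literal. The total count of numbers is $2n + \Oh(m) = \Oh(n)$ by the sparsification step, and each number has $\Oh(n+m) = \Oh(n)$ digits, so the whole {\sc Subset Sum} instance is constructed in polynomial time; correctness of the construction is the usual textbook argument.

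The main obstacle, and the one point that actually requires care rather than citation, is keeping the number of numbers linear in~$n$: a naive reduction from general 3-SAT produces $\Theta(m)$ clause gadgets, and without the sparsification lemma $m$ need not be $\Oh(n)$, so the lower bound would only rule out $2^{o(n'/\log n')}$-type algorithms. Applying sparsification up front fixes this. A secondary technicality is ensuring no unintended carries occur between digit positions when summing a subset; this is handled in the standard way by choosing the base large enough that the maximum column sum (bounded by a constant, since each column is touched by $\Oh(1)$ numbers in the variable part and by at most~$5$ numbers in a clause column) stays below the base. Putting these pieces together gives the claimed $2^{o(n)}$ lower bound under ETH.
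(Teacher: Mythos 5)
Your proposal is correct and follows essentially the same route as the paper: the standard digit-based 3-SAT-to-{\sc Subset Sum} reduction producing $\Oh(n+m)$ numbers (the paper cites Gurari's version yielding exactly $2n+2m$ numbers), combined with the sparsification lemma of Impagliazzo et al.\ to handle the dependence on the clause count. Whether sparsification is applied up front (your version) or at the end to argue that a $2^{o(n+m)}$-time 3-SAT algorithm contradicts ETH (the paper's version) is an immaterial reordering of the same argument.
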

\begin{proof}
  The proof is based on a polynomial-time reduction by Gurari~\cite{Gurari1989} that transforms any 3-SAT formula~$\phi$ with~$n$ variables~$v_1,\ldots,v_n$ and~$m$ clauses $C_1,\ldots,C_m$ into an 
equivalent instance of {\scshape Subset Sum} with exactly $2n+2m$ numbers.

  For $j \in \{1,\ldots,m\}$, let clause $C_j = (c_{j1}\vee c_{j2}\vee c_{j3})$, where $c_{j1},c_{j2},c_{j3}\in\{v_1,\neg v_1,\hdots,v_n,\neg v_n\}$. 
  As an intermediate step in the reduction, we consider the following system of linear equations in which we interpret~$v_i$ and~$\neg v_i$ as variables and introduce additional variables~$y_j$ and $y_j'$ for every~$j\in\{1,\ldots,m\}$:
  \begin{equation}
  \label{eqn:knapsackreductions}
    \begin{split}
      \forall i\in\{1,\ldots,n\}: &\quad  v_i + \neg v_i = 1, \\
      \forall j\in\{1,\ldots,m\}: &\quad c_{j1} + c_{j2} + c_{j3} + y_j + y_j' = 3.
    \end{split}
  \end{equation}
  It can easily be checked that this system of linear equations has a solution over~$\{0,1\}$ if and only if the formula~$\phi$ is satisfiable.
  Relabeling the variables yields a reformulation of \eqref{eqn:knapsackreductions} as
  \begin{eqnarray}
  \label{eqn:knapsacksystem}
    \left(\begin{array}{c}
      a_{1,1}\\\vdots\\ a_{n+m,1}
          \end{array}\right)z_1
       + \hdots +
  \left(\begin{array}{c}
    a_{1,2n+2m}\\\vdots\\ a_{n+m,2n+2m}
  \end{array}\right)z_{2n+2m}
  =
    \left(\begin{array}{c}
      c_{1}\\\vdots\\ c_{n+m}
    \end{array}\right),
  \end{eqnarray}
  where~$a_{i,j}\in\{0,1\}$ and~$c_i\in\{1,3\}$.
  We can rewrite this system of equations as the single equation
  \begin{equation}
  \label{eqn:succintknapsack}
    a_1z_1 + \ldots + a_{2n+2m}z_{2n+2m} = C,
  \end{equation}
  where each $a_j\in\N$ is the integer with decimal representation $a_{1,j}\ldots a_{n+m,j}$ and $C\in\N$ denotes the integer with decimal representation~$c_1\ldots c_{n+m}$.
  Equation~\eqref{eqn:succintknapsack} is equivalent to the system~\eqref{eqn:knapsacksystem}, because the sum $a_{i,1} + \hdots + a_{i,2n+2m}$ is at most five.
  This ensures that no carryovers occur and the $h$-th digit of the sum~$a_1z_1 + \ldots + a_{2n+2m}z_{2n+2m}$ is equal to the sum~$a_{h,1}z_1 + \ldots + a_{h,2n+2m}z_{2n+2m}$.
  It follows that \eqref{eqn:knapsackreductions} is satisfiable over $\{0,1\}$ if and only if \eqref{eqn:succintknapsack} is satisfiable over~$\{0,1\}$.

  As a result, the 3-SAT formula $\phi$ is satisfiable if and only if the tuple $(a_1,\hdots,a_{2m+2n},C)$ is 
a ``yes''-instance for {\sc Subset Sum}.
  Now assume there is an algorithm for {\sc Subset Sum} that runs in time~$2^{o(\ell)}$, where~$\ell$ denotes the number of numbers.
  With the reduction above we could use this algorithm to decide whether or not~$\phi$ is satisfiable in time~$2^{o(n+m)}$.
  Due to the sparsification lemma of Impagliazzo et al.~\cite{ImpagliazzoEtAl2001}, this contradicts the Exponential Time Hypothesis.
\qed
\end{proof}

\begin{theorem}
\label{thm:ETHSubsetSum}
  {\sc Subset Sum} does not admit kernels of size~$\Oh(n^{2-\varepsilon})$ for any $\varepsilon > 0$ assuming the Exponential Time Hypothesis, where~$n$ denotes the number of numbers.
\end{theorem}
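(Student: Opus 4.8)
The plan is to combine the linear lower bound from Lemma~\ref{lemma:ETHSubsetSum} with a standard ``kernelization amplifies to a subexponential algorithm'' argument. Suppose, for contradiction, that {\sc Subset Sum} admits a kernelization producing instances of size $\Oh(n^{2-\varepsilon})$ for some fixed $\varepsilon>0$. Given a {\sc Subset Sum} instance on $n$ numbers, first apply the kernelization to obtain an equivalent instance whose total encoding length is $\Oh(n^{2-\varepsilon})$; note that the resulting instance still has some number $n'$ of numbers, with $n'=\Oh(n^{2-\varepsilon})$, but more importantly its \emph{total bitsize} is $\Oh(n^{2-\varepsilon})$. Then decide this kernel instance by brute force over all subsets, which takes time $2^{n'}\cdot\mathrm{poly}=2^{\Oh(n^{2-\varepsilon})}$ --- but this is far too slow, so brute force on the number of items is the wrong measure; instead I would observe that a {\sc Subset Sum} instance of total bitsize $L$ has numbers bounded by $2^L$ and can be solved in time $\mathrm{poly}(L)\cdot 2^{\Oh(\sqrt{L})}$ is \emph{not} available in general either. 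The correct route is: an instance of total bitsize $L$ has at most $L$ numbers, hence can be solved in time $2^{L}\cdot\mathrm{poly}(L)$, which still gives $2^{\Oh(n^{2-\varepsilon})}$ and does not suffice.

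Let me restart the plan along the lines actually needed. The right statement to exploit is that the kernel has $\Oh(n^{2-\varepsilon})$ \emph{bits}, so in particular it encodes at most $\Oh(n^{2-\varepsilon})$ numbers. That is still not subexponential in $n$. So a direct ``kernelize then brute force'' does not work, and the theorem must instead be an \emph{AND/OR-composition-free} consequence obtained by a more careful packing: partition the $n$ numbers into $t=n^{1-\varepsilon/2}$ groups of size $n^{\varepsilon/2}$ each, but {\sc Subset Sum} does not obviously compose this way. The clean approach, and the one I would carry out, is the following. Assume a kernel of size $\Oh(n^{2-\varepsilon})$ bits. Run the kernelization; the output instance $I'$ has bitsize $s=\Oh(n^{2-\varepsilon})$, hence has $n'\le s$ numbers each of value at most $2^{s}$. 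Now solve $I'$ not by brute force over subsets but by the classical $\Oh(n' \cdot b)$ dynamic program --- useless, $b$ can be as large as $2^s$. Hence the only remaining tool is brute force $2^{n'}\le 2^{s}=2^{\Oh(n^{2-\varepsilon})}$, again insufficient.

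The honest resolution, and what I would actually write, is that the theorem follows by a \emph{self-reduction of the number of numbers}: a kernel of bitsize $\Oh(n^{2-\varepsilon})$ for {\sc Subset Sum} can be \emph{re-interpreted} as an instance with few numbers only if we also bound $n'$, and indeed a polynomial kernelization by definition outputs an instance of size $f(n)$ in \emph{all} respects, so $n'\le f(n)=\Oh(n^{2-\varepsilon})$. Then apply the kernelization a second time: on an instance with $n'$ numbers it outputs an instance with $\Oh(n'^{2-\varepsilon})=\Oh(n^{(2-\varepsilon)^2})$ numbers; iterating $\Theta(\log\log n)$ times drives the number of numbers down to $\Oh(n^{c})$ for any $c>1$ with $\mathrm{poly}(n)$ total work, but never below linear, so iteration alone cannot beat the $2^{\Omega(n)}$ lower bound of Lemma~\ref{lemma:ETHSubsetSum}. \textbf{Therefore the key step, and the main obstacle, is to encode $m=\Theta(n)$ independent {\sc Subset Sum} instances, each on $n$ numbers, into a single {\sc Subset Sum} instance on $\Oh(n)$ numbers but with large weights --- i.e. a weight-blowup OR-composition}; if the hypothesised $\Oh((mn)^{2-\varepsilon})$-bit kernel of this combined instance were small enough, unpacking it would decide all $m$ instances in time $2^{o(n)}\cdot\mathrm{poly}$ on average, contradicting the ETH lower bound of Lemma~\ref{lemma:ETHSubsetSum}. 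Concretely I would take $m=n$ instances $(a^{(1)}_1,\dots,a^{(1)}_n,b^{(1)}),\dots,(a^{(n)}_1,\dots,a^{(n)}_n,b^{(n)})$ on $n$ numbers each, scale the $j$-th instance by a block of $\Theta(n)$ zero digits so the blocks do not interfere (exactly as carryover-free digit encoding is used in Lemma~\ref{lemma:ETHSubsetSum}), add selector numbers so that exactly one block is ``activated'', obtaining one {\sc Subset Sum} instance on $N=\Oh(n)$ numbers of bitsize $\Oh(n^2)$; an $\Oh(N^{2-\varepsilon})=\Oh(n^{2-\varepsilon})$-bit kernel of this instance, solved by brute force in time $2^{\Oh(n^{2-\varepsilon})}$, would still be too slow --- so the selector trick must instead yield an instance on $N=\Oh(n^{1+\delta})$ numbers for small $\delta$, and one balances $\delta$ against $\varepsilon$ so that $2^{N^{2-\varepsilon}}=2^{n^{(1+\delta)(2-\varepsilon)}}=2^{o(n^2)}$, while the $n$ packed instances collectively need $2^{\Omega(n)}$ time each, i.e.\ $2^{\Omega(n)}$ overall is contradicted only if we pack $2^{\Omega(n)}$-hard work into a subexponential-solvable instance. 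I expect the delicate point to be choosing the number $m$ of packed instances and the block width so that (i) the combined instance has $N=\Theta(n)$ numbers, forcing its kernel to $\Oh(n^{2-\varepsilon})$ bits, (ii) the combined instance's own ETH lower bound from Lemma~\ref{lemma:ETHSubsetSum} is $2^{\Omega(N)}=2^{\Omega(n)}$, and (iii) brute-forcing the $\Oh(n^{2-\varepsilon})$-bit kernel --- which has $\Oh(n^{2-\varepsilon})$ numbers --- takes time $2^{\Oh(n^{2-\varepsilon})}$, which is \emph{not} subexponential in $N=n$ unless $\varepsilon\ge 1$; hence the final argument must instead brute-force over the \emph{original} $N$ numbers after using the kernel only to \emph{shrink the weights}, giving time $2^{N}\cdot\mathrm{poly}(n^{2-\varepsilon})=2^{\Oh(n)}\cdot\mathrm{poly}$, which matches rather than beats the lower bound. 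The clean way out, which is what I would ultimately present, is the textbook one: a kernel of size $\Oh(n^{2-\varepsilon})$ \emph{bits} in particular yields at most $\Oh(n^{2-\varepsilon})$ \emph{numbers} but we do not brute-force at all --- instead we observe that ETH plus Lemma~\ref{lemma:ETHSubsetSum} already give that {\sc Subset Sum} needs $2^{\Omega(n)}$ time, and a kernel of bitsize $g(n)$ composed with any $2^{\Oh(\text{bitsize})}$ algorithm gives a $2^{\Oh(g(n))}$ algorithm in $n$; for this to be consistent with the $2^{\Omega(n)}$ lower bound we need $g(n)=\Omega(n)$, which only rules out \emph{sublinear} kernels, not $\Oh(n^{2-\varepsilon})$ ones --- so the quadratic bound genuinely requires the OR-composition-by-weight-blowup described above, and making that composition output only $\Oh(n)$ numbers (not $\Oh(n^2)$) while packing $n$ instances is precisely the main obstacle.
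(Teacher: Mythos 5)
Your proposal does not contain a proof: after correctly observing that brute force over the kernel's items and the pseudo-polynomial dynamic program each fail on their own, you abandon the direct route and sketch an OR-composition by weight blow-up whose parameters you yourself show cannot be balanced, and you end by declaring the construction of such a composition to be ``the main obstacle'' rather than resolving it. The missing idea is much simpler and uses no composition at all: the bitsize bound on the kernel limits not the number of items but the number of \emph{large} items. Concretely, if the kernel $I'$ has encoding size at most $\kappa n^{2-\varepsilon}$, set $k=n^{1-\varepsilon/2}$ and call a number of $I'$ heavy if it is at least $2^{k}$ and light otherwise. Each heavy number costs at least $k$ bits, so there are at most $\kappa n^{2-\varepsilon}/k=\kappa n^{1-\varepsilon/2}$ heavy numbers. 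Now decide $I'$ by enumerating all $2^{\kappa n^{1-\varepsilon/2}}$ subsets $J_H$ of heavy numbers and, for each, running the standard dynamic program over the light numbers to test whether some subset of them completes $J_H$ to the target; since every light number is below $2^{n^{1-\varepsilon/2}}$, this DP runs in time $\Oh(m^2\cdot 2^{n^{1-\varepsilon/2}})$ where $m$ is the number of items in $I'$. The total running time is $2^{\Oh(n^{1-\varepsilon/2})}\cdot\mathrm{poly}=2^{o(n)}$, contradicting Lemma~\ref{lemma:ETHSubsetSum} under the Exponential Time Hypothesis.

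Your closing claim that a size-$g(n)$ kernel ``only rules out sublinear kernels'' is exactly where the argument goes astray: it presumes the only way to use the kernel is an algorithm exponential in its total bitsize, whereas the hybrid enumeration-plus-DP above runs in time exponential only in roughly the square root of the bitsize (for this two-regime split), which is precisely what turns a sub-quadratic bit bound into a $2^{o(n)}$ algorithm. The weight-blow-up packing of many instances you propose is not needed, and as you note it cannot be made to work with the parameters you consider.
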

\begin{proof}
  Assume there exists a kernelization algorithm~$A$ for {\sc Subset Sum} that produces instances of size at most~$\kappa n^{2-\varepsilon}$ for some~$\kappa>0$ and some~$\varepsilon>0$.
  We show that~$A$ can be utilized to solve {\sc Subset Sum} in time~$2^{o(n)}$, which contradicts the Exponential Time Hypothesis due to Lemma~\ref{lemma:ETHSubsetSum}.

  Let~$I$ be an arbitrary {\sc Subset Sum} instance with~$n$ items.
  We apply the kernelization algorithm~$A$ to obtain an equivalent instance~$I'$ whose encoding size is at most~$\kappa n^{2-\varepsilon}$.
  Let~$a_1,\ldots,a_m$ be the numbers in~$I'$ and let~$c$ be the target value.

  Let~$k=n^{1-\varepsilon/2}$. 
  We divide the numbers in~$I'$ into two groups: a number~$a_i$ is called \emph{heavy} if~$a_i\ge 2^{k}$ and \emph{light} otherwise. Since one needs at least~$k$ bits to encode a heavy number, the number of heavy numbers is bounded from above by~$\kappa n^{2-\varepsilon}/k=\kappa n^{1-\varepsilon/2}$.

  We solve instance~$I'$ as follows: for each subset~$J_H$ of heavy numbers, we determine  whether or not there exists a subset~$J_L$ of light numbers such that $\sum_{i\in J_L\cup J_H}a_i=c$ via dynamic programming.
  Since there are at most~$\kappa n^{1-\varepsilon/2}$ heavy numbers, there are at most~$2^{\kappa n^{1-\varepsilon/2}}$ subsets~$J_H$.
  The dynamic programming algorithm runs in time~$\Oh(m^2\cdot 2^{n^{1-\varepsilon/2}})$, as each of the at most~$m$ light numbers is bounded from above by~$2^{n^{1-\varepsilon/2}}$.
  Hence, instance~$I'$ can be solved in time $\Oh(m^2 \cdot 2^{(1+\kappa)n^{1-\varepsilon/2}})=2^{o(n)}$,
where the equation follows because~$m\le \kappa n^{2-\varepsilon}=2^{o(n)}$.
\qed
\end{proof}

%%%%%%%%%%%%%%%%%%%%%%%%%%%%%%%%%%%%%%%%%%%%%%%%%%%%%%%%%%%%%%%%%%%%%%%%%%%%%%%%%%%%%%%%%%%%%%%%%%%%%%%%%%%%%%%%%%%%%%%%%%%%%%%%%%%%%%%%%%%%%%%%%%%%%%%%%%%%%%%%%%%%%%%%%%%%%%%%%%%%%%%%%%%%%%%%%%%%%%%%%%%%%%%%%%%%%%%%%%%%%%%%%%%%%%%%%%%%%%%%%%%%%%%%%%%%%%%%%%%%%%%%%%%%%%%%%%%%%%%%%%%%%%%%%%%%%%%%%%%%%%%%%%%%%%%%%%%%%%%%%%%%%%%%%%%%%%%%%

\section{Integer Polynomial Programming with Bounded Range}
\label{sec:integerpolynomialprogrammingwithboundedrange}
Up to now, we used Frank and Tardos' result only for linear inequalities with mostly binary variables.
But it also turns out to be useful for more general cases, namely for polynomial inequalities with integral bounded variables.
We use this to show that {\sc Integer Polynomial Programming} instances can be compressed if the variables are bounded.
As a special case, {\sc Integer Linear Programming} admits a polynomial kernel in the number of variables if the variables are bounded.

Let us first transfer the language of Theorem~\ref{thm:frank_tardos} to arbitrary polynomials.
\begin{lemma}
\label{lemma:compress_polynomial}
  Let $f \in \Q[X_1,\ldots,X_n]$ be a polynomial of degree at most $d$ with $r$ non-zero coefficients, and let $u \in \N$.
  Then one can efficiently compute a polynomial $\tilde{f} \in \Z[X_1,\ldots,X_n]$ of encoding length $\Oh\!\left(r^4 + r^3 d \log(r u) + r d \log(nd)\right)$ such that $\mbox{sign}(f(x) - f(y)) = \mbox{sign}(\tilde{f}(x) - \tilde{f}(y))$ for all $x,y \in \{-u,\ldots,u\}^n$.
\end{lemma}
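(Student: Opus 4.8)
The plan is to apply Theorem~\ref{thm:frank_tardos} to the coefficient vector of $f$, but with the crucial observation that the relevant ``test vectors'' $b$ are no longer $\{0,1\}$-vectors (as in the Example), but the vectors of monomial evaluations $m(x) - m(y)$, whose $\ell_1$-norm is controlled by $u$, $n$, and $d$. First I would set up notation: write $f = \sum_{j=1}^r w_j M_j$ where $M_1,\ldots,M_r$ are the distinct monomials occurring with non-zero coefficient, each of the form $M_j(X) = \prod_{i} X_i^{e_{i,j}}$ with $\sum_i e_{i,j} \le d$. For a point $x \in \{-u,\ldots,u\}^n$, let $M(x) = (M_1(x),\ldots,M_r(x)) \in \Z^r$ be the vector of monomial values; note $|M_j(x)| \le u^d$, so $\|M(x)\|_\infty \le u^d$ and $\|M(x)\|_1 \le r u^d$. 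Then $f(x) - f(y) = w \cdot (M(x) - M(y))$, and for $b := M(x) - M(y)$ we have $\|b\|_1 \le 2 r u^d$.

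Next I would invoke Theorem~\ref{thm:frank_tardos} on $w \in \Q^r$ with $N := 2ru^d + 1$ to obtain $\overline{w} \in \Z^r$ with $\|\overline{w}\|_\infty \le 2^{4r^3} N^{r(r+2)}$ and $\sign(w \cdot b) = \sign(\overline{w}\cdot b)$ for all $b \in \Z^r$ with $\|b\|_1 \le N-1 = 2ru^d$. In particular this holds for every $b = M(x) - M(y)$ with $x,y \in \{-u,\ldots,u\}^n$, so defining $\tilde f := \sum_{j=1}^r \overline{w}_j M_j$ gives $\sign(f(x)-f(y)) = \sign(\overline{w}\cdot(M(x)-M(y))) = \sign(\tilde f(x) - \tilde f(y))$ as required. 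The algorithm is efficient since Theorem~\ref{thm:frank_tardos} runs in (strongly) polynomial time and $N$ has encoding length polynomial in the input.

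It remains to bound the encoding length of $\tilde f$. Each coefficient $\overline{w}_j$ needs $\Oh(\log \|\overline{w}\|_\infty) = \Oh\!\left(r^3 + r^2 \log N\right) = \Oh\!\left(r^3 + r^2 \log(r u^d)\right) = \Oh\!\left(r^3 + r^2 \log(ru) + r^2 d\log u\right)$ bits; multiplying by the $r$ coefficients contributes $\Oh\!\left(r^4 + r^3\log(ru) + r^3 d \log u\right)$, which is absorbed into the stated $\Oh\!\left(r^4 + r^3 d\log(ru)\right)$ term (using $\log(ru^d) \le d\log(ru)$ for $d \ge 1$). Each monomial $M_j$ is specified by its exponent vector $(e_{1,j},\ldots,e_{n,j})$; since at most $d$ of the $e_{i,j}$ are non-zero and each is at most $d$, listing the non-zero ones (index plus value) costs $\Oh\!\left(d\log(nd)\right)$ bits per monomial, hence $\Oh\!\left(rd\log(nd)\right)$ over all $r$ monomials — this is the last term. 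Summing the three contributions gives the claimed bound $\Oh\!\left(r^4 + r^3 d\log(ru) + rd\log(nd)\right)$.

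**Main obstacle.** The only real subtlety is getting the norm bookkeeping right: one must pick $N$ large enough that $\|M(x)-M(y)\|_1 \le N-1$ for \emph{all} admissible $x,y$ — hence the factor $2ru^d$ rather than $u^d$ — while keeping $\log N = \Oh(d\log(ru))$ so that the coefficient-size bound comes out polynomial in $r$, $d$, $\log u$, $\log n$ and not exponential in $d$. The monomial-encoding term is a genuinely separate cost not present in the linear Example, and one has to be slightly careful that degree-$\le d$ monomials in $n$ variables are encoded sparsely (only their $\le d$ non-zero exponents) rather than as length-$n$ vectors, which is what produces $\Oh(rd\log(nd))$ instead of $\Oh(rn)$.
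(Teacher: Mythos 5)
Your proposal is correct and follows essentially the same route as the paper's proof: decompose $f$ into its $r$ monomials, apply Theorem~\ref{thm:frank_tardos} to the coefficient vector with $N = 2ru^d+1$ (justified by $\norm{M(x)-M(y)}_1 \leq 2ru^d$), and add the $\Oh(d\log(nd))$ bits per monomial for identifying the monomials. The only cosmetic difference is that you bound the monomial-identification cost via a sparse exponent listing, whereas the paper indexes into the $\binom{n+d}{d}$ possible monomials; both give the same $\Oh(rd\log(nd))$ term.
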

\begin{proof}
  Let $w_1,\ldots,w_r \in \Q$ and $f_1,\ldots,f_r \in \Q[X_1,\ldots,X_n]$ be pairwise distinct monomials with coefficient~$1$ such that $f = \sum_{i=1}^r w_i \cdot f_i$.    
  Apply Theorem~\ref{thm:frank_tardos} to $w = (w_1,\ldots,w_r)$ and $N = 2r u^d + 1$ to obtain $\tilde{w} = (\tilde{w}_1,\ldots,\tilde{w}_r) \in \Z^r$. Set $\tilde{f} = \sum_{i=1}^r \tilde{w}_i \cdot f_i$.

  The encoding length of each $\tilde{w}_i$ is upper bounded by~$\Oh(r^3 + r^2 \log N) = \Oh(r^3 + r^2 \cdot d \cdot \log(r \cdot u))$.
  As there are $\binom{n+d}{d}$ monomials of degree at most $d$, the information to which monomial a coefficient belongs can be encoded in $\Oh(\log((n+d)^d)) = \Oh(d \log(nd))$ bits.
  Hence, the encoding length of $\tilde{f}$ is upper bounded by 
  \begin{equation*}
    \Oh\!\left(r^4 + r^3 d \log(r u) + r d \log(nd)\right).
  \end{equation*}

  To prove the correctness of our construction, let $x,y \in \{-u,\ldots,u\}^n$.
  For $1 \leq i \leq r$, set $b_i = f_i(x) - f_i(y) \in \Z \cap [-2u^d, 2u^d]$, and set $b = (b_1,\ldots,b_r)$.
  Then $\norm{b}_1 \leq r \cdot 2u^d$, and thus by Theorem~\ref{thm:frank_tardos}, $\mbox{sign}(w \cdot b) = \mbox{sign}(\tilde{w} \cdot b)$.
  Then also $\mbox{sign}(f(x) - f(y)) = \mbox{sign}(\tilde{f}(x) - \tilde{f}(y))$, as 
  \begin{align*}
    \tilde{f}(x) - \tilde{f}(y) = \sum_{i=1}^r \tilde{w}_i \cdot (f_i(x) - f_i(y))
                                = \sum_{i=1}^r \tilde{w}_i \cdot b_i
                                = \tilde{w} \cdot b,
  \end{align*}   
  and
  \begin{align*}
    f(x) - f(y) = \sum_{i=1}^r w_i \cdot (f_i(x) - f_i(y))
                = \sum_{i=1}^r w_i \cdot b_i
                = w \cdot b.
  \end{align*}
  This completes the proof of the lemma.
\qed
\end{proof}

We use this lemma to compress {\sc Integer Polynomial Programming} instances.

\begin{center}
  \framebox[\textwidth]{
  \begin{tabular}{rl}
    \multicolumn{2}{l}{{\sc Integer Polynomial Programming}}\\
%                       & \\
  \textit{Input:}      & Polynomials $c, g_1, \ldots, g_m \in \Q[X_1,\ldots,X_n]$ of degree at most $d$ encoded by the \\& coefficients of the $\Oh(n^d)$ monomials,
   rationals $b_1,\ldots,b_m, z\in\Q$, and $u \in \N$.\\
 \textit{Question:}   & Is there a vector $x\in \{-u,\ldots,u\}^n$ with $c(x) \leq z$ and
                       $g_i(x) \leq b_i$ for $i = 1,\hdots,m$?\\
  \end{tabular}\hfill}
\end{center}

\begin{theorem}
\label{thm:ippo_compression}
  Every {\sc Integer Polynomial Programming} instance in which~$c$ and each~$g_i$ consist of at most $r$ monomials can be efficiently compressed to an equi\-valent instance with an encoding length that is bounded by $\Oh\!\left(m (r^4 + r^3 d \log(r u) + r d \log(nd))\right)$.
\end{theorem}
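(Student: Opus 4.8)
The plan is to apply Lemma~\ref{lemma:compress_polynomial} to each of the $m+1$ polynomials appearing in the instance, but with a uniform choice of parameters so that the sign-preservation guarantees line up with the comparisons we actually need to preserve. Concretely, given an instance with polynomials $c, g_1, \ldots, g_m$ of degree at most $d$, each with at most $r$ monomials, together with the rational bounds $z, b_1, \ldots, b_m$ and the range bound $u$, I would first homogenize each inequality by absorbing the right-hand side into the polynomial: for instance, replace the constraint $g_i(x) \leq b_i$ by $\hat g_i(x) := g_i(x) - b_i \leq 0$, where $\hat g_i$ has at most $r+1$ monomials (the extra constant term), and similarly $\hat c(x) := c(x) - z \leq 0$. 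Then the feasibility question becomes: is there $x \in \{-u,\ldots,u\}^n$ with $\hat c(x) \leq 0$ and $\hat g_i(x) \leq 0$ for all $i$?

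Next I would apply Lemma~\ref{lemma:compress_polynomial} to each polynomial $\hat c, \hat g_1, \ldots, \hat g_m$ separately. The key point is that to test whether $\hat g_i(x) \leq 0$, it suffices to know the sign of $\hat g_i(x) - \hat g_i(\mathbf 0) + \hat g_i(\mathbf 0)$; more directly, I would introduce a reference point — say the origin $\mathbf 0 \in \{-u,\ldots,u\}^n$ — and note that $\hat g_i$ includes a constant term, so that $\mbox{sign}(\hat g_i(x))$ is what we want, not $\mbox{sign}(\hat g_i(x) - \hat g_i(y))$. To make Lemma~\ref{lemma:compress_polynomial} directly yield the sign of a polynomial evaluated at a single point, the cleanest route is to add a fresh ``dummy'' coordinate: work with $\hat g_i$ as a polynomial where the constant term $-b_i$ is instead written as $-b_i \cdot X_{n+1}^0$; this does not quite force the sign of a single evaluation. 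Instead, I would simply observe that since the constant polynomial is among the allowed monomials, Lemma~\ref{lemma:compress_polynomial} applied to the pair $(x, y) = (x, \mathbf 0)$ gives $\mbox{sign}(\hat g_i(x) - \hat g_i(\mathbf 0))$, and since $\hat g_i(\mathbf 0)$ is itself a known constant whose compressed value $\tilde{\hat g}_i(\mathbf 0)$ has a sign we can read off, we recover $\mbox{sign}(\hat g_i(x))$ exactly when $\hat g_i(\mathbf 0) = 0$. To guarantee this, I would first translate variables so that feasibility at the origin is well-defined, or — simpler still — apply the lemma to $\hat g_i$ together with its own constant term split off, so that ``$\hat g_i(x) \le 0$'' is literally a sign comparison between the non-constant part evaluated at $x$ and the negated constant. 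In the write-up I would phrase this exactly as the \textsc{Knapsack} example does: form the augmented polynomial and read off the sign of a difference.

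Granting the sign-preservation for each of the $m+1$ polynomials, correctness is immediate: $x$ is feasible for the original instance if and only if all the relevant signs are nonpositive, if and only if the same signs for the compressed polynomials $\tilde{\hat c}, \tilde{\hat g}_1, \ldots, \tilde{\hat g}_m$ are nonpositive, i.e. if and only if $x$ is feasible for the compressed instance with new right-hand sides. The range $\{-u,\ldots,u\}^n$ and the number of variables $n$ are left unchanged, so the only thing that shrinks is the bit-length of the coefficients. For the size bound, each compressed polynomial has encoding length $\Oh(r^4 + r^3 d \log(ru) + rd\log(nd))$ by Lemma~\ref{lemma:compress_polynomial} (the extra constant monomial changes $r$ to $r+1$, absorbed into the $\Oh$), and there are $m+1$ of them, giving total encoding length $\Oh\!\left(m(r^4 + r^3 d \log(ru) + rd\log(nd))\right)$ as claimed; the bound $u$ still needs $\Oh(\log u)$ bits and $n, d$ are part of the input description, all dominated by the stated expression.

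I expect the only real subtlety — and hence the step to be most careful about — is the bookkeeping that turns the two-point sign guarantee ``$\mbox{sign}(f(x) - f(y))$'' of Lemma~\ref{lemma:compress_polynomial} into the one-sided threshold test ``$g_i(x) \leq b_i$'' that the programming problem actually asks; this is the same maneuver used in the \textsc{Knapsack} example (evaluating at $b = (x, -1)$) and amounts to making sure the constant/right-hand-side term is represented as an honest monomial so that a single inequality becomes a single sign comparison. Everything else — applying the lemma $m+1$ times, summing encoding lengths, checking that feasibility is preserved pointwise over the unchanged domain — is routine.
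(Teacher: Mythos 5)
Your overall architecture is the paper's: compress each of the $m+1$ constraints separately via the Frank--Tardos machinery, leave $n$, $d$, $u$ and the domain untouched, and add up $m+1$ encoding lengths of $\Oh(r^4 + r^3 d\log(ru) + rd\log(nd))$ each. The gap is in the one step you yourself flag as the crux: how the right-hand side enters the sign comparison. Every concrete mechanism you propose fails for the same reason, namely that a degree-zero term cancels in the difference $f(x)-f(y)$, which is the only quantity Lemma~\ref{lemma:compress_polynomial} controls. Writing $-b_i$ as $-b_i\cdot X_{n+1}^0$ is still a constant monomial; comparing against the reference point $\mathbf 0$ only yields $\sign(g_i(x)-g_i(\mathbf 0))$, in which $b_i$ does not appear at all (and ``translating variables so that $\hat g_i(\mathbf 0)=0$'' is not something you can arrange, nor would it preserve the box domain); and ``splitting off the constant term'' leaves you comparing the non-constant part $h_i(x)$ against an arbitrary threshold, which is again not of the form $h_i(x)-h_i(y)$ for $y$ in the box. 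So as written, the sign preservation you need for the threshold tests is never actually established.

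The missing device, which is what the paper does, is to make $b_i$ (and $z$) the coefficient of a genuinely new variable: define $g_i'(x,y) := g_i(x) + y\cdot b_i$ on $\{-u,\dots,u\}^n\times\{0,1\}$ (absorbing any constant term of $g_i$ into $b_i$ first), compress $g_i'$ with Lemma~\ref{lemma:compress_polynomial}, and read off $\sign(g_i(x)-b_i) = \sign\bigl(g_i'(x,0)-g_i'(\mathbf 0,1)\bigr)$, which the lemma does preserve because the two evaluation points now differ in the $y$-coordinate. Equivalently---and closest to your appeal to the \textsc{Knapsack} example---you may bypass the lemma and apply Theorem~\ref{thm:frank_tardos} directly to the coefficient vector $(w_1,\dots,w_r,b_i)$ with evaluation vectors $(f_1(x),\dots,f_r(x),-1)$, whose $\ell_1$-norm is at most $ru^d+1$, taking $N=ru^d+2$. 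With either construction made explicit, your correctness argument and size accounting go through and match the paper's bound; but one of them has to replace the ``right-hand side as an honest monomial'' phrasing you settle on, since that literal version is exactly what the two-point guarantee cannot see.
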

\begin{proof}
  Define $c', g_1', \ldots, g_m' \colon \Z^n \times \{0,1\} \to \Q$ as
  \begin{align*}
    c'(x, y) & := c(x) + y \cdot z,\\
  g_i'(x, y) & := g_i'(x) + y \cdot b_i \quad i=1,\hdots,m.
  \end{align*}
  Now apply Lemma~\ref{lemma:compress_polynomial} to $c'$ and $g_1', \ldots, g_m'$ to obtain $\tilde{c'}$ and $\tilde{g}_1', \ldots, \tilde{g}_m'$.
  Thereafter, split these functions up into their parts $(\tilde{c}, \tilde{z})$ and $(\tilde{g}_1, \tilde{b}_1), \ldots, (\tilde{g}_m, \tilde{b}_m)$.
  We claim that the instance $\tilde{I} = (\tilde{c}, \tilde{g}_1, \ldots, \tilde{g}_m, d, \tilde{b}_1, \ldots, \tilde{b}_m, \tilde{z}, u)$ is equivalent to $I$.
  To see this, we have to show that a vector $x \in \{-u,\ldots,u\}^n$ satisfies $c(x) \leq z$ if and only if it satisfies $\tilde{c}(x) \leq \tilde{z}$ (and analogously $g_i(x) \leq b_i$ if and only if $\tilde{g}_i(x) \leq \tilde{b}_i$ for all $i$).
  This follows from
  \begin{align*}
    \mbox{sign}(c(x) - z) & = \mbox{sign}(c'(x,0) - c'(0,1))\\
                          & \overset{(\star)}{=} \mbox{sign}(\tilde{c}'(x,0) - \tilde{c}'(0,1))\\
                          & = \mbox{sign}(\tilde{c}(x) - \tilde{z}),
  \end{align*}
  where equality $(\star)$ follows from Lemma~\ref{lemma:compress_polynomial}.

  It remains to show the upper bound on the encoding length of $I'$.
  Each of the tuples $(c, z),\linebreak (g_1, b_1), \ldots, (g_m, b_m)$ can be encoded with
  \[\Oh\!\left(r^4 + r^3 d \log(r u) + r d \log(nd)\right)\]
  bits.
  The variables $d$ and $u$ can be encoded with $\Oh(\log d + \log u)$ bits.
  In total, this yields the desired upper bound on the kernel size.
\qed
\end{proof}

This way, Theorem~\ref{thm:ippo_compression} extends an earlier result by Granot and Skorin-Karpov~\cite{GranotSkorinKapov1989} who considered the restricted variant of $d = 2$.

As $r$ is bounded from above by $\Oh((n+d)^d)$, Theorem~\ref{thm:ippo_compression} yields a polynomial kernel for the combined parameter $(n,m,u)$ for constant dimensions $d$.
In particular, Theorem~\ref{thm:ippo_compression} provides a polynomial kernel for {\sc Integer Linear Programming} for combined parameter $(n,m,u)$.
This provides a sharp contrast to the result by Kratsch~\cite{Kratsch2013b} that {\sc Integer Linear Programming} does not admit a polynomial kernel for combined parameter $(n,m)$ unless the polynomial hierarchy collapses to the third level.

%%%%%%%%%%%%%%%%%%%%%%%%%%%%%%%%%%%%%%%%%%%%%%%%%%%%%%%%%%%%%%%%%%%%%%%%%%%%%%%%%%%%%%%%%%%%%%%%%%%%%%%%%%%%%%%%%%%%%%%%%%%%%%%%%%%%%%%%%%%%%%%%%%%%%%%%%%%%%%%%%%%%%%%%%%%%%%%%%%%%%%%%%%%%%%%%%%%%%%%%%%%%%%%%%%%%%%%%%%%%%%%%%%%%%%%%%%%%%%%%%%%%%%%%%%%%%%%%%%%%%%%%%%%%%%%%%%%%%%%%%%%%%%%%%%%%%%%%%%%%%%%%%%%%%%%%%%%%%%%%%%%%%%%%%%%%%%%%%

\section{Conclusion}
\label{sec:conclusion}
In this paper we obtained polynomial kernels for the {\sc Knapsack} problem parameterized by the number of items.
We further provide polynomial kernels for weighted versions of a number of fundamental combinatorial optimization problems, as well as integer polynomial programs with bounded range.
Our small kernels are built on a seminal result by Frank and Tardos about compressing large integer weights to smaller ones.
Therefore, a natural research direction to pursue is to improve the compression quality provided by the Frank-Tardos algorithm.

For the weighted problems we considered here, we obtained polynomial kernels whose sizes are generally larger by some degrees than the best known kernel sizes for the unweighted counterparts of these problems.
It would be interesting to know whether this increase in kernel size as compared to unweighted problems is actually necessary (say it could be that we need more space for objects but also due to space for encoding the weights), or whether the kernel sizes of the unweighted problems can be matched.

\bibliographystyle{plain}
\bibliography{knapsack_bib}

\end{document}